\definecolor{light-gray}{gray}{0.9}
\definecolor{darkgreen}{rgb}{0.0,0.7,0.0}
\definecolor{GhostOpColor}{rgb}{0.0, 0.1, 1.0}
\newcommand\ghostopcode[1]{{\color{GhostOpColor}\bfseries\code{#1}}}
\newcommand\code[1]{{\small\texttt{#1}}}
\newcommand\secref[1]{Sec.~\ref{#1}}
\newcommand\appref[1]{App.~\ref{#1}}
\newcommand\happref[1]{App.~#1}
\newcommand\defref[1]{Def.~\ref{#1}}
\newcommand\figref[1]{Fig.~\ref{#1}}
\newcommand\thmref[1]{Thm.~\ref{#1}}
\newcommand{\wrt}{{{w.r.t.\@}}}
\newcommand{\eg}{{{e.g.\@}}}
\newcommand{\ie}{{{i.e.\@}}}
\newcommand{\etal}{{\it{et al.\@}}}
\def\nocolour{ }
\newcommand{\soutifcolour}[1]{\ifdefined\nocolour{}\else{\protect\sout{#1}}\fi}
\newcommand\todo[1]{\ifdefined\nocolour{}\else{\textcolor{red}{TODO: #1}}\fi}
\newcommand{\peter}[1]{\ifdefined\nocolour{#1}\else{\color{orange}{#1}}\fi}
\newcommand{\pout}[1]{\peter{{\soutifcolour{#1}}}}
\newcommand{\as}[1]{\ifdefined\nocolour{{#1}}\else{\color{red}{#1}}\fi}
\newcommand{\asout}[1]{\as{\soutifcolour{#1}}}
\newcommand{\thibault}[1]{\ifdefined\nocolour{{#1}}\else{\color{blue}{#1}}\fi}
\newcommand{\tfootnote}[1]{\ifdefined\nocolour{}\else\thibault{\footnote{\thibault{THIBAULT: #1}}}\fi}
\newcommand{\tout}[1]{\thibault{\soutifcolour{#1}}}
\newcommand{\gaurav}[1]{\ifdefined\nocolour{#1}\else{\color{darkgreen}{#1}}\fi}
\newcommand{\gfootnote}[1]{\ifdefined\nocolour{}\else\gaurav{\footnote{\gaurav{GAURAV: #1}}}\fi}
\newcommand{\gout}[1]{\gaurav{{\soutifcolour{#1}}}}
\newcommand{\wand}{\ensuremath{\mathbin{-\!\!*}}}
\newcommand{\cwand}{\ensuremath{\mathbin{-\!\!*}_c}}
\newcommand{\twand}{\ensuremath{\mathbin{-\!\!*}_{\mathcal{T}}}}
\newcommand{\bkeyword}[1]{\texttt{\bfseries #1}}
\newcommand{\bcommand}[2]{\bkeyword{#1}\;#2}
\newcommand{\bassert}[1]{\bcommand{assert}{#1}}
\newcommand{\bfold}[1]{\bcommand{fold}{#1}}
\newcommand{\bunfold}[1]{\bcommand{unfold}{#1}}
\newcommand{\bapply}[1]{\bcommand{apply}{#1}}
\definecolor{darkred}{rgb}{0.55, 0.0, 0.0}
\newcommand*{\pointsto}[2]{\ensuremath{{#1} \mapsto {#2}}}
\newcommand{\Assign}[2]{\ensuremath{#1\gets#2}}
\def\CAS#1#2#3#4{\@ifnextchar\bgroup {\CASAssign{#1}{#2}{#3}{#4}}{\CASOp{#1}{#2}{#3}{#4}}}
\newcommand{\CASOp}[4]{\texttt{CAS}_{#1}(#2,#3,#4)}
\newcommand{\CASAssign}[5]{\Assign{#1}{\CASOp{#2}{#3}{#4}{#5}}}
\newcommand{\pc}{\textit{pc}}
\setlist{nolistsep}
\newskip \point
\def \premisespacing{\quad \quad}
\def \RulePremisesNewlineMore[#1]#2.#3#4{\@ifnextchar\bgroup{\RulePremisesNewlineMore[#1]{#2}.{#3\premisespacing#4}}{\@ifnextchar.{\RulePremisesNewline[#1]{{\begin{array}{c}#2\\#3\premisespacing#4\end{array}}}}{\RuleMultiPremise[#1]{{\begin{array}{c}#2\\#3\end{array}}}{#4}}}}
\def \RulePremisesNewline[#1]#2.#3{\@ifnextchar\bgroup{\RulePremisesNewlineMore[#1]{#2}.{#3}}{\@ifnextchar.{\RulePremisesNewline[#1]{{\begin{array}{c}#2\\#3\end{array}}}}{\RuleMultiPremise[#1]{#2}{#3}}}}
\def \RuleMultiPremise[#1]#2#3{\@ifnextchar\bgroup{\RuleMultiPremise[#1]{#2\premisespacing#3}}{\@ifnextchar.{\RulePremisesNewline[#1]{#2\premisespacing#3}}{\prooftree #2\justifies#3 \using{#1}\endprooftree}}}
\def \RuleWithName[#1]#2{\@ifnextchar\bgroup {\RuleMultiPremise[#1]{#2}}{\@ifnextchar.{\RulePremisesNewline[#1]{#2}}{\prooftree \justifies #2 \using{#1} \endprooftree}}}
\def \RuleWithInfo[#1]{\@ifnextchar[{\RuleWithNameAndCondition[#1]}{\RuleWithName[(#1)]}}
\def \RuleWithNameAndCondition[#1][#2]{\RuleWithName[(#1)^{#2}]}
\def \Inf[#1]{\proofrulebaseline=2ex \abovedisplayskip12\point\belowdisplayskip12\point \abovedisplayshortskip8\point\belowdisplayshortskip8\point \@ifnextchar[{\RuleWithInfo}{\RuleWithName[#1]}}
\begin{document}
\title{Sound Automation of Magic Wands
\\(extended version)
}
%
%

\author{Thibault Dardinier\inst{1}\orcidlink{0000-0003-2719-4856} \and
Gaurav Parthasarathy\inst{1} \and
Noé Weeks\inst{2} \and\\
Peter M{\"u}ller\inst{1}\orcidlink{0000-0001-7001-2566} \and
Alexander J.~Summers\inst{3}\orcidlink{0000-0001-5554-9381}}

\authorrunning{T. Dardinier et al.}
%
\institute{Department of Computer Science, ETH Zurich, Switzerland\\
\email{\{thibault.dardinier, gaurav.parthasarathy, peter.mueller\}@inf.ethz.ch}
\and
École Normale Supérieure, France\\
\email{noe.weeks@ens.psl.eu}\\
\and
University of British Columbia, Canada\\
\email{alex.summers@ubc.ca}
}

\maketitle              
\begin{abstract}
The magic wand $\wand$ (also called separating implication) is a separation logic connective commonly used to specify properties of partial data structures, for instance during iterative traversals. A \emph{footprint} of a magic wand formula $A \wand B$ is a state that, combined with any state in which $A$ holds, yields a state in which $B$ holds. The key challenge of proving a magic wand (also called \emph{packaging} a wand) is to find such a footprint. Existing package algorithms either have a high annotation overhead or, as we show in this paper, are unsound.

We present a formal framework that precisely characterises a wide design space of possible package algorithms applicable to a large class of separation logics. We prove in Isabelle/HOL that our formal framework is sound and complete, and use it to develop a novel package algorithm that offers competitive automation and is sound. Moreover, we present a novel, restricted definition of wands and prove in Isabelle/HOL that it is possible to soundly combine fractions of such wands, which is not the case for arbitrary wands. We have implemented our techniques for the Viper language, and demonstrate that they are effective in practice.

\end{abstract}

\section{Introduction}\label{sec:intro}
Separation logic~\cite{Reynolds02a} (SL hereafter) is a program logic that has been widely used to prove complex properties of heap-manipulating programs.
The two main logical connectives that enable such reasoning are the \emph{separating conjunction} $*$ and the \emph{separating implication} (more commonly known as the \emph{magic wand}) \wand, in combination with \emph{resource assertions} which represent \eg{} exclusive ownership of (and permission to access) particular heap locations.
The separating conjunction expresses that two assertions prescribe ownership of disjoint parts of the heap, useful, for instance, to reason about aliasing or race conditions.
More precisely, the assertion $A*B$ holds in a program state $\sigma$ if and only if $\sigma$ can be split into two \emph{compatible} program states $\sigma_A$ and $\sigma_B$ such that $A$ and $B$ hold in $\sigma_A$ and $\sigma_B$, respectively.
In SL, heaps of program states are \emph{partial} maps from locations to values; their domains represent heap locations exclusively owned. Two program states are compatible if (the domains of) their heaps are disjoint.

Intuitively, a magic wand $A \wand B$ can be used to express the difference between the heap locations that $B$ and $A$ provide permission to access.
The magic wand is useful, for instance, to specify partial data structures, where $B$ specifies the entire data structure and $A$ specifies a part that is missing~\cite{Tuerk10,Maeda11}.
$A \wand B$ holds in a state $\sigma_w$, if and only if for \emph{any} program state $\sigma_A$ in which $A$ holds and that is compatible with $\sigma_w$, $B$ holds in the state obtained by combining the heaps of $\sigma_A$ and $\sigma_w$.
Thus, if $A * (A \wand B)$ holds in a state, then so does $B$, analogously to the \emph{modus ponens} inference rule in propositional logic.

The magic wand has been shown to enable or greatly simplify proofs in many different cases~\cite{Yang01,Krishnaswami06,HaackHurlin09,Tuerk10,Maeda11,Dodds11,Cao2019,AstrauskasMuellerPoliSummers19b}.
For instance, Yang~\cite{Yang01} uses the magic wand to prove the Schorr-Waite graph marking algorithm.
Dodds~\etal~\cite{Dodds11} employ the wand to specify synchronisation barriers for deterministic parallelism.
Examples using magic wands to specify partial data structures include tracking ongoing traversals of a data structure~\cite{Tuerk10,Maeda11}, where the left-hand side of the wand specifies the part of the data structure yet to be traversed, or for specifying protocols that enforce orderly modification of data structures~\cite{Krishnaswami06,HaackHurlin09,Jensen11} (\eg{} the protocol governing Java iterators).
More recently, wands have been used for formal reasoning about borrowed references in the Rust programming language, which employs an ownership type system to ensure memory safety~\cite{AstrauskasMuellerPoliSummers19b}.
Magic wands \asout{can }concisely \as{represent}\asout{specify} the \as{\emph{remainder} of a} data structure from which a borrowed reference was taken, \as{as well as reflecting back modifications to the part}\asout{partial data structure} accessible via the reference.
\thibault{\asout{As an}\as{For} example, consider a struct \code{Point} (represented by a SL predicate \code{Point}) with two fields \code{x} and \code{y} of type \code{i32} (represented by the SL predicate \code{i32}).
A Rust method that takes as input a Point \code{p} and returns \as{a borrow of} its \asout{first coordinate}\as{field} \code{x} \as{is}\asout{could be} specified with the postcondition
$\code{int32(x)} * (\code{int32(x)} \wand \code{Point(p)})$,
thus enabling the caller to regain ownership of the entire data structure \code{Point(p)}.}

The complexity of SL proofs has given rise to a variety of automatic SL verifiers that reduce the required proof effort.
Given the usefulness of magic wands, it is important that such verifiers also provide automatic support for wands.
However, reasoning about a magic wand requires reasoning about \emph{all} states in which the left-hand side holds, which is challenging.
It has been shown that a separation logic even without the separating conjunction (but with the magic wand) is as expressive as a variant of second-order logic and, thus, undecidable~\cite{Brochenin12}.

Two different approaches~\cite{SchwerhoffSummers15,BlomHuisman15} that provide partially-automated support are implemented in the verifiers Viper~\cite{MuellerSchwerhoffSummers16} and VerCors~\cite{BlomDHO17}.
However, the approach implemented in VerCors~\cite{BlomHuisman15} incurs significant annotation overhead, and the approach  in Viper~\cite{SchwerhoffSummers15} suffers from a fundamental, previously undiscovered flaw that renders the approach unsound.
Both approaches require user-provided \emph{package operations} to direct the verifier's proof search.
\emph{Packaging} a wand $A \wand B$ expresses that the verifier should prove and subsequently record $A \wand B$.
To package $A \wand B$ the verifier must split the current state into two compatible states $\sigma'$ and $\sigma_w$ such that $A \wand B$ holds in $\sigma_w$.
We call $\sigma_w$ a \emph{footprint} of the wand.
After successfully packaging a wand, the verifier must disallow changes to $\sigma_w$ \tout{in order }to preserve the wand's validity: the verifier \emph{packages the footprint into the wand}.


The key challenge for supporting magic wands in automatic verifiers is to define a \emph{package algorithm} that packages a wand.
In VerCors's package algorithm\peter{~\cite{BlomHuisman15}}, a user must manually specify a footprint for the wand and the algorithm checks whether the wand holds in the specified footprint.
This leads to a lot of annotation overhead.
Viper's current package algorithm\peter{~\cite{SchwerhoffSummers15}} reduces this overhead significantly by \pout{defining an algorithm to} automatically infer\peter{ring} a suitable footprint.
Unfortunately, as we show in this paper, Viper's current algorithm has a fundamental flaw that causes the algorithm to infer an \emph{incorrect} footprint in certain cases\pout{, \ie{} a state in which the wand does \emph{not} hold.
As a result, one can prove results that do not hold}, which \peter{may lead} to unsound reasoning.
We will explain the fundamental flaw in~\secref{sec:problem}; it illustrates the subtlety of supporting this important connective.

\subsubsection{Approach and Contributions.}
In this paper, we present a formal foundation for sound package algorithms\pout{ (key to automating wands in automatic SL verifiers)}, and we implement a novel such
algorithm based on these foundations.
Our algorithm requires the same annotation overhead as the prior, flawed Viper algorithm, which is (to our knowledge) the most automatic existing approach.
%
We introduce a formal framework expressed via a novel \emph{package logic} that defines the design space for package algorithms.
The soundness of a package algorithm can be justified by showing that the algorithm finds a proof in our package logic.
The design space for package algorithms is large since there are various aspects that affect how one expresses the algorithm including (1)~which footprint an algorithm infers or checks \pout{for a wand} (there are often multiple options\peter{, see}\pout{as we will show in~}~\secref{sec:package_logic}), (2)~the state model (which differs between different SL verifiers),  and (3)~restricted definitions of wands (for instance, to ensure each wand has a unique minimal footprint).
Our package logic deals with (1) by capturing all sound
derivations for the same wand.
To deal with (2) and (3), our logic is parametric along multiple dimensions.
For instance, the state model can be any separation algebra to support different SL extensions (e.g.\ fractional permissions~\cite{Boyland}).

Our logic also supports parameters to restrict the allowed footprints for wands in systematic ways.
Such restrictions are useful, for instance, in a logic supporting \emph{fractional permissions}.
Fractional permissions permit splitting ownership/resources into shared fragments which typically permit read access to the underlying data.
However, as we show in~\secref{sec:combinable}, fractional parts of general magic wands cannot always be soundly recombined.
Existing solutions for other connectives impose side conditions to enable sound recombinations~\cite{LeHobor18}, which are often hard to check automatically.
We instead introduce a novel restriction of magic wands to avoid such side conditions and develop a corresponding second package algorithm again based on the formal framework provided by our package logic.
\noindent
We make the following contributions:
\begin{itemize}
    \item We formalise a \emph{package logic} that can be used as a basis for a wide range of package algorithms (\secref{sec:package_logic}).
    The logic has multiple parameters including: a separation algebra to model the states and a parameter to restrict the definition of a wand in a systematic way.
    We formally prove the logic sound and complete for any instantiation of the parameters in Isabelle/HOL.~\cite{PackageLogicAFP}

    \item We develop a novel, restricted definition of a wand  (\secref{sec:combinable}) and prove in Isabelle/HOL that this wand can always be recombined~\cite{CombinableWandsAFP}.

    \item
        We implement sound package algorithms for both the standard and the restricted wand in the Viper verifier and justify their soundness directly via our package logic (\secref{sec:automation}).
        We evaluate both algorithms on the Viper test suite. Our evaluation shows that (1) our algorithms perform similarly well to prior work and correctly reject examples where prior work is unsound, and (2) our restricted wand definition is expressive enough for most examples.
\end{itemize}
\thibault{Our Isabelle formalisation and the implementation of our new package algorithm are publicly available~\cite{PackageLogicAFP,CombinableWandsAFP,artifact}.}
\tout{We will make our Isabelle formalisation~\cite{PackageLogicAFP,CombinableWandsAFP} and \as{the implementation of our new} package algorithm\gaurav{s} available as part of our artefact~\cite{artifact}.}

\section{Background and Motivation}\label{sec:problem}
In this section, we present the necessary background for this paper.
We use \emph{implicit dynamic frames}~\cite{SmansEA09} to represent SL assertions, since both existing automatic verifiers that support wands (VerCors and Viper) are based on it.
There is a known strong correspondence between SL and implicit dynamic frames~\cite{ParkinsonSummers12}.
%

\subsection{Implicit Dynamic Frames}\label{subsec:idf_background}
Just like SL assertions, implicit dynamic frames (IDF hereafter) assertions specify not only value information, but also \emph{permissions} to heap locations that are allowed to be accessed.
To justify dereferencing a heap location, the corresponding permission is required, ensuring memory safety.
IDF assertions specify permissions to locations and value information separately.
An assertion $\code{acc(x.val)}$ (an \emph{accessibility predicate}) denotes permission to the heap \emph{location} \code{x.val}, while $\code{x.val} = v$ expresses that \code{x.val} contains \emph{value} $v$. The separating conjunction in IDF enforces disjointness (formally: acts multiplicatively) with respect to resource assertions such as accessibility predicates; in particular,
if $\code{acc(x.val)} * \code{acc(y.val)}$ holds in a state, then \code{x} and \code{y} must be different (analogously to SL).

The main difference between IDF and SL is that SL does not allow general heap-dependent expressions such as \code{x.val = v} or \code{x.left.right}~\cite{SmansEA09} to be specified separately from the permissions to the heap locations they depend on. The IDF assertion $\code{acc(x.val)} * \code{x.val = v}$ must be expressed in SL via the \emph{points-to assertion} \pointsto{\code{x.val}}{\code{v}}, which also conveys exclusive permission to the location \code{x.val}. IDF supports heap dependent expressions
within \emph{self-framing} assertions: those which require permissions to all the heap locations on whose values they depend (e.g.\ $\code{acc(x.val)} * \code{x.val = v}$ is self-framing but \code{x.val = v} is not)~\cite{SmansEA09}.

\subsection{A Typical Example Using Magic Wands}\label{subsec:tree_example}
\begin{figure}[t]
\begin{minipage}[t]{0.60\textwidth}
\begin{viper2linenum}
method leftLeaf(x: Ref) : (y: Ref)
  requires Tree(x)
  ensures Tree(x) {
  y := x
  @\ghostopcode{package}@ Tree(x) $\wand$ Tree(x) @\label{line:leftLeaf_package_init}@

  while(y.left != null)
    inv Tree(y) $*$ (Tree(y) $\wand$ Tree(x)) {
    y := y.left
    @\ghostopcode{package}@ Tree(y) $\wand$ Tree(x) @\label{line:leftLeaf_package_step}@
      // { hints for package} @\label{line:leftLeaf_proof_script}@
  }
  @\ghostopcode{apply}@ Tree(y) $\wand$ Tree(x) @\label{line:leftLeaf_apply_post}@
}
\end{viper2linenum}
\end{minipage}
\hfill
\begin{minipage}[t]{0.40\textwidth}
\begin{viper2}
Tree(x: Ref) $\triangleq$
 acc(x.val) *
 acc(x.left) * acc(x.right)
 (x.left != null $\Rightarrow$
            Tree(x.left)) *
 (x.right != null $\Rightarrow$
            Tree(x.right))
\end{viper2}
\end{minipage}
\caption{The code on the left \as{finds}\asout{computes} the leftmost leaf of a binary tree and includes specifications to prove memory safety. The predicate describing the permissions of a tree is defined on the right. The loop invariant uses a wand to summarise the permissions of the input tree excluding the tree \as{not yet}\asout{yet to be} traversed. The blue operations are ghost operations to guide the verifier\as{; we}\asout{We}
omit \as{those}\asout{ghost operations} specific to predicates.
The \code{package} \pout{(lines~\ref{line:leftLeaf_package_step} and~\ref{line:leftLeaf_proof_script})} requires further hints in existing approaches\peter{, see}\pout{(the hints vary based on the approach) as discussed in}~\happref{J}.
\pout{Here for instance, \thibault{to package the wand}, one must specify
that\tout{ in order to package the wand,} the wand \as{from} the loop invariant must be applied.}
}
\label{fig:leftLeaf}
\end{figure}
\figref{fig:leftLeaf} shows a variation of an example from the VerifyThis \pout{verification} competition~\cite{HuismanKM15}.
The method \code{leftLeaf} \peter{iteratively} computes the leftmost leaf of a binary tree\pout{ in an iterative fashion} (\code{package} and \code{apply} operations, shown in blue, should be ignored for now).
The pre- and postconditions of \code{leftLeaf} are both \code{Tree(x)}, which is a \emph{predicate instance} used to specify all permissions to \peter{the fields of}\pout{\code{val}, \code{left}, and \code{right} fields in} the tree rooted at \code{x} (the recursive definition of this predicate is on the right of~\figref{fig:leftLeaf}).
Proving this specification amounts to proving that \code{leftLeaf} is memory-safe and that the permissions to the input tree are preserved,
\peter{enabling further calls}\pout{which is typically necessary to enable further such methods to be called} on the same tree.

The key challenge when verifying \code{leftLeaf} is specifying an appropriate loop invariant.
The loop invariant must track the permissions to the subtree rooted at \code{y} that still needs to be traversed, since otherwise dereferencing \code{y.left} in the loop body is not allowed.
Additionally, the invariant must track all of the remaining permissions in the input tree rooted at \code{x} (the permissions to the nodes already traversed and others unreachable from \code{y}), since otherwise the postcondition cannot be satisfied.
The former can be easily expressed with \code{Tree(y)}.
The latter can be elegantly achieved with a magic wand $\code{Tree(y)} \wand \code{Tree(x)}$.
This wand promises $\code{Tree(x)}$ if one combines the wand with $\code{Tree(y)}$.
That is, the wand represents (at least) the difference between the permissions making up the two trees.
Using SL's modus-ponens-like inference rule (directed by the \code{apply} operation on line~\ref{line:leftLeaf_apply_post}, explained next), one can show that the loop invariant entails the postcondition.

\subsection{Wand Ghost Operations}\label{subsec:wand_annot}
Automatic SL verifiers such as GRASShopper~\cite{PiskacWZ14}, VeriFast~\cite{JacobsSPVPP11}, VerCors, and Viper generally represent permissions owned by a program state in two ways: by recording predicate instances (such as \code{Tree(x)} in~\figref{fig:leftLeaf}) and \emph{direct} permissions to heap locations.
Magic wand instances provide a third way to represent permissions and are recorded analogously. Verifiers that support them require two wand-specific \emph{ghost operations}, which instruct the verifiers when to prove a wand and when to apply a recorded wand instance using SL's modus-ponens-like rule.

A \code{package} ghost operation expresses that a verifier should prove a new wand instance in the current state and report an error if the proof attempt fails.
To prove a new wand instance, the verifier must split the current state into two states $\sigma'$ and $\sigma_w$ such that the wand holds in the \emph{footprint} state $\sigma_w$; on success, permissions in the footprint are effectively exchanged for the resulting magic wand instance.
We call a procedure that selects a footprint by splitting the current state a \emph{package algorithm}.
On lines~\ref{line:leftLeaf_package_init} and~\ref{line:leftLeaf_package_step} of~\figref{fig:leftLeaf}, new wands are packaged to establish and preserve the invariant, respectively.

The \code{apply} operation \emph{applies} a wand $A \wand B$ using SL's modus-ponens-like rule if the verifier records a wand instance of $A \wand B$ and $A$ holds in the current state (and otherwise fails), exchanging these for the assertion $B$.
The \code{apply} operation is directly justified by the wand's semantics: Combining a wand's footprint with \emph{any} state in which $A$ holds is guaranteed to yield a state in which $B$ holds.
For the \code{apply} operation on line~\ref{line:leftLeaf_apply_post} of~\figref{fig:leftLeaf}, the verifier removes the applied wand instance and \code{Tree(y)}, in exchange for the predicate instance \code{Tree(x)}.

\subsection{The Footprint Inference Attempt (FIA)}\label{subsec:footprint_infer_base}
Package algorithms differ in how a footprint for the specified magic wand is selected.
In \peter{VerCors}\pout{one existing approach}~\cite{BlomHuisman15}, the user must manually provide the footprint and the algorithm checks whether the specified footprint is correct.
In \peter{Viper's current} \pout{the other} approach~\cite{SchwerhoffSummers15}, a footprint is inferred. We explain and compare to the latter approach since it is the more automatic of the two; hereafter, we refer to its package algorithm as \emph{the Footprint Inference Attempt} (\emph{FIA}).
Inferring a correct footprint is challenging due the complexity of the wand connective.
In particular, we have discovered that, in certain cases, the FIA infers \emph{incorrect} footprints, leading to unsound reasoning\footnote{\peter{This unsoundness might not be observable in restricted logics, but it is in Viper (see \happref{B}) and  the rich  logics supported by existing verification tools.}}.
The goal of this subsection is to understand the FIA's key ideas, which our solution will build on, and why it is unsound.

In general, there may be multiple valid footprints for a magic wand $A \wand B$.
The FIA attempts to infer a footprint which is as close as possible to the \emph{difference} between the permissions required by $B$ and $A$, taking as few permissions as possible while aiming for a footprint compatible with $A$ (so that the resulting wand can be later applied)~\cite{SchwerhoffSummers15}.
That is, the FIA includes only permissions in the footprint it infers that are specified by $B$ \emph{and not} guaranteed by $A$.

For a wand $A \wand B$, the FIA constructs an arbitrary state $\sigma_A$ that satisfies $A$ (representing $\sigma_A$ symbolically).
Then, the FIA tries to construct a state $\sigma_B$ in which $B$ holds by taking permissions (and copying corresponding heap values) from $\sigma_A$ if possible and the current state otherwise.
If this algorithm succeeds, the (implicit) inferred footprint consists of the permissions that were taken from the current state.
The FIA constructs $\sigma_B$ by iterating over the permissions and logical constraints in $B$.
For each permission, the FIA checks whether $\sigma_A$ owns the permission.
If so, the FIA adds the permission to $\sigma_B$ and removes the permission from $\sigma_A$.
Otherwise, the FIA removes the permission from the current state or fails if the current state does not have the permission.
For each logical constraint, the FIA checks that the constraint holds in $\sigma_B$ as constructed so far.
We show an example of the FIA correctly packaging a wand in \happref{A}.

\subsubsection{Unsoundness of the FIA.}
\label{subsec:unsoundness}
We have discovered that for some wands $A \wand B$, the FIA  determines an \emph{incorrect} footprint for the magic wand. This unsoundness can arise when the FIA performs a case split on the content of the arbitrary state $\sigma_A$ satisfying $A$.
In such situations, the FIA infers a footprint for each case \emph{separately}, making use of properties that hold in that case. For certain wands, this leads to different footprints being selected for each case, while \emph{none} of the inferred footprints can be used to justify $B$ in \emph{all} cases, \ie{} for \emph{all} states $\sigma_A$ that satisfy $A$. As a result, the packaged wand does \emph{not} hold in any of the inferred footprints, which can make verification unsound, as we illustrate below.

The wand $w := \code{acc(x.f)} * (\code{x.f = y} \vee  \code{x.f = z}) \wand \code{acc(x.f)} * \code{acc(x.f.g)}$ illustrates the problem.
For this wand, every state $\sigma_A$ satisfying the left-hand side must have permission to \code{x.f}.
However \code{x.f} may either point to \code{y} or \code{z}.
If \code{x.f} points to \code{y} in $\sigma_A$, then to justify the right-hand side's second conjunct, the footprint must contain permission to \code{y.g}.
Analogously, if \code{x.f} points to \code{z} in $\sigma_A$, then the footprint must contain permission to \code{z.g}.
The wand's semantics requires a footprint to justify the wand's right-hand side for all states in which the left-hand side holds, and thus, a correct footprint must be able to justify \emph{both} cases.
Hence, the footprint must have permission to \emph{both} \code{y.g} and \code{z.g}.
However, the FIA's inferred footprint is in effect the disjunction of these two permissions.

Packaging the above wand $w$ using the FIA leads to unsound reasoning.
After the incorrect package described above in a state with permission to \code{x.f}, \code{y.g}, and \code{z.g}, the assertion $\code{acc(x.f)} * (\code{acc(y.g)} \vee \code{acc(z.g)}) * w$ can be proved since the FIA removes permission to either \code{y.g} or \code{z.g} from the current state, but not both. However, this assertion does not actually hold! According to the semantics of wands, $w$'s footprint must include permission to \code{x.f} or permission to both \code{y.g} and \code{z.g},
which implies that the assertion $\code{acc(x.f)} * (\code{acc(y.g)} \vee \code{acc(z.g)}) * w$
is equivalent to false.
\pout{As we show in \appref{app:viper-unsound}, this unsoundness can be observed in Viper, since it implements the FIA.}

The unsoundness of the FIA shows the subtlety and challenge of developing sound package algorithms. Algorithms that soundly infer a single footprint for all states in which the wand's left-hand side holds must be more involved than the FIA\@.
Ensuring their soundness requires a \emph{formal} framework to construct them and justify their correctness. We introduce such a framework in the next section.

\section{A Logical Framework for Packaging Wands}\label{sec:package_logic}
In this section, we present a new logical framework that defines the design space for (sound) package algorithms. The core of this framework is our \emph{package logic}, which defines the space of potential algorithmic choices of a footprint for a particular magic wand. Successfully packaging a wand in a given state is (as we will show) equivalent to finding a derivation in our package logic, and any actual package algorithm must correspond to a proof search in our logic (if it is sound). In particular, we provide soundness (\thmref{thm:soundness}) and completeness (\thmref{thm:completeness}) results for our logic. We define a specific package algorithm with this logic at its foundation, inspired by the FIA package algorithm~\cite{SchwerhoffSummers15}
(described in \secref{subsec:footprint_infer_base}) but amending its unsoundness, resulting in (to the best of our knowledge) the first sound and relatively automatic package algorithm.

All definitions and results in this section have been fully mechanised\thibault{~\cite{PackageLogicAFP}} in Isabelle/HOL. Our mechanised definitions are parametric with the underlying verification logic in various senses: the underlying separation algebra is a parameter, the syntax of assertions is defined in a way which allows simple extension with different base cases and connectives, and the semantics of magic wands itself can be restricted if only particular kinds of footprint are desired in practice. As a specific example of the latter parameter, in \secref{sec:combinable} we define a novel restriction of magic wand footprints which guarantees better properties in combination with certain usages of fractional permissions; this is seamlessly supported by the general package logic presented here. Nonetheless, to simplify the exposition of this section, we will assume that any magic wand footprint satisfying the connective's standard semantics is an acceptable result.

\subsection{Footprint Selection Strategies}\label{subsec:motivation}
As we explained in \secref{sec:intro}, there is a wide design space for package algorithms; in particular, many potential strategies for finding a magic wand's footprint exist and none is clearly optimal.
Recall that a footprint is a state, and thus consists of permissions to certain heap locations as well as storing their corresponding values; for simplicity we identify a footprint by the permissions it contains.

For example, consider the following magic wand (using fractional permissions) \\
$\code{acc(x.b, 1/2)}\; \wand\; \code{acc(x.b, 1/2)} * (\code{x.b} \Rightarrow \code{acc(x.f)})$.
Suppose this magic wand is to be packaged in a state where full permissions to both \code{x.b} and \code{x.f} are held, and the value of \code{x.b} is currently false. Two valid potential footprints are:
\begin{enumerate}
\item Full permission to \code{x.f}. This is sufficient to guarantee the right-hand side will hold regardless of the value that \code{x.b} has by the time the wand is applied.
\item Half permission to \code{x.b}. By including this permission, the fact that \code{x.b} is currently false is also included, and thus permission to \code{x.f} is not needed.
\end{enumerate}
There is no clear reason to prefer one choice over the other: different package algorithms (or manual choices) might choose either. Our package logic allows either choice along with any of many less optimal choices, such as taking both permissions.
On the other hand, as motivated earlier in \secref{subsec:motivation}, our package logic must (and does) enforce that a single valid footprint is chosen for a wand that works for each and every potential state satisfying its left-hand side.

\subsection{Package Logic: Preliminaries}
\gout{In order}\gaurav{To} capture different state models and \gout{different}flavours of separation logic, our package logic is parameterised by a separation algebra.
For space reasons, we present here a simplified overview of this \gout{separation}algebra, but all definitions \thibault{(\gout{along with the semantics of our}\gaurav{including our} assertion \gout{language}\gaurav{semantics})} are given in \happref{D} and have been mechanised\gout{in Isabelle/HOL}.
We consider a separation algebra~\cite{Calcagno2007,Dockins2009} where $\Sigma$ is the set of states,
$\oplus: \Sigma \times \Sigma \rightarrow \Sigma$ is a partial operation that is commutative and associative,
and $e \in \Sigma$, which corresponds to the empty state, is a neutral element for $\oplus$.
We write $\succeq$ for the induced partial order of the resulting partial commutative monoid, and
$\sigma_1 \# \sigma_2$ iff $\sigma_1 \oplus \sigma_2$ is defined (\ie{} $\sigma_1$ and $\sigma_2$ are compatible).
Finally, if $\sigma_2 \succeq \sigma_1$, we define the subtraction $\sigma_2 \ominus \sigma_1$ to be the $\succeq$-largest state $\sigma_r$ such that
$\sigma_2 = \sigma_1 \oplus \sigma_r$.

We define our package logic for an assertion language with the following grammar:
$A = A{*}A \mid \mathcal{B} {\Rightarrow} A \mid \mathcal{B}$,
where $A$ ranges over assertions and $\mathcal{B}$ over \emph{semantic assertions}.
\tout{The semantics of this assertion language is formally defined in \appref{app:context}.}
To allow our package logic to be applied to a variety of underlying assertion logics, we distinguish only the two most-relevant connectives:
the separating conjunction and an implication (for expressing conditional assertions). To support additional \gout{connectives and base-cases}\gaurav{constructs} of the assertion logic, the  third type of assertion we consider is a \gout{general kind of}\emph{semantic assertion}, \ie{}~\gaurav{a} function from $\Sigma$ to Booleans. This third type can be instantiated to represent logical assertions that \gout{don't}\gaurav{do not} match the first two cases.
In particular, assertions such as \code{x.f = 5}, \code{acc(x.f)}, abstract predicates (such as \code{Tree(x)}) or magic wands can be represented as semantic assertions.
This core assertion language can also be easily extended with native support for \eg{} the logical conjunction and disjunction connectives; we explain in \happref{E} how to extend the rules of the logic accordingly.

\subsection{The Package Logic}\label{subsec:logic}
We define our package logic to prescribe the design space of algorithms for deciding how, in an initial state $\sigma_0$, to select a valid footprint (or fail) for a magic wand $A \wand B$.
The aim is to infer states $\sigma_w$ and $\sigma_1$ that partition $\sigma_0$ (\ie{}~$\sigma_0 = \sigma_1 \oplus \sigma_w$) such that $\sigma_w$ is a valid footprint for $A \wand B$ (when combined with any compatible state satisfying $A$, the resulting state satisfies $B$). In particular, all permissions (and logical facts) required by the assertion $B$ must either come from the footprint or be guaranteed to be provided by any compatible state satisfying $A$.

Recall from \secref{subsec:footprint_infer_base} that the mistake underlying the FIA approach ultimately resulted from allowing multiple different footprints to be selected conditionally on a state satisfying $A$, rather than a single footprint which works for all such states. Our package logic addresses this concern by defining judgements in terms of the \emph{set} of all states satisfying $A$; whenever \emph{any} of these tracked states is insufficient to provide a permission required by $B$, our logic will force this permission to be added \emph{in general} to the wand's footprint (taken from the current state).

A \emph{witness set} $S$ is a set of pairs of states $(\sigma_A, \sigma_B)$;
 conceptually, the first represents the state available for trying to prove $B$ \emph{in addition} to the current state; this is initially a state satisfying the wand's left-hand side $A$. The second represents the state assembled (so-far) to attempt to satisfy the right-hand side $B$. We write $S^1$ for the set of first elements of all pairs in a witness set $S$. A \emph{context} $\Delta$ is a pair $(\sigma,S)$ of a state and a witness set; here, $\sigma$ represents the (as-yet unused remainder of the) current state in which the wand is being packaged.

The basic idea behind a derivation in our logic is to show how to assemble a witness set in which \emph{all} second elements are states satisfying $B$, via some combinations of: (1)~moving a part of the first element of a pair in the witness set into the second, and (2)~moving a part of the outer state $\sigma$ into \emph{all} first elements of the pairs (this becomes a part of the wand's footprint). The actual judgements of the logic are a little more complex, to correctly record any hypotheses (called \emph{path-conditions}) that result from deconstructing conditional assertions in $B$.

\subsubsection{Configurations and Reductions.}
A \emph{configuration} represents a current objective in our package logic:
the part of the wand's right-hand side still to be satisfied as well as the current state of a footprint computation. A configuration is a triple $\langle B, \pc, (\sigma, S) \rangle$, where $B$ is an assertion, $pc$ is a \emph{path condition} (a function from $\Sigma$ to Booleans), and $(\sigma, S)$ is a context. Conceptually, $B$ is the assertion still to be satisfied, \pc{} represents hypotheses we are currently working under, and the context $(\sigma, S)$ tracks the current state and witness set, as described above.

A \emph{reduction} is a judgement $\langle B, \pc, (\sigma_0, S_0) \rangle \rightsquigarrow (\sigma_1, S_1)$, representing the achievement of the objective described via the configuration on the left, resulting in the final context on the right; $\sigma_1$ is the new version of the outer state (and becomes the new current state after the \code{package} operation); whatever was removed from the initial outer state is implicitly the selected footprint state $\sigma_w$. If a reduction is derivable in our package logic, this footprint $\sigma_w$ guarantees that for all $(\sigma_A, \sigma_B) \in S_0$,
if $(\sigma_A \oplus \sigma_B) \# \sigma_w$, then $\sigma_A\oplus\sigma_w$ satisfies $\pc \Rightarrow B$.
The condition $(\sigma_A \oplus \sigma_B) \# \sigma_w$ ensures that the pair $(\sigma_A, \sigma_B)$ actually corresponds to a state in which the wand can be applied given the chosen footprint $\sigma_w$, as we explain later.
The package logic defines the steps an algorithm may take to achieve this goal.

We represent packaging a wand $A \wand B$ in state $\sigma_0$ by the derivation of a reduction
$\langle B, \lambda \sigma \ldotp \top, (\sigma_0, \{ (\sigma_A, e) \mid \sigma_A \models A \} ) \rangle \rightsquigarrow (\sigma_1, S_1)$,
for some state $\sigma_1$ and witness set $S_1$.
The path condition is initially true (we are not yet under any \pout{particular} hypotheses). The initial witness set contains all pairs of a state $\sigma_A$ that satisfies $A$ and the empty state $e$, to which a successful reduction will add permissions in order to satisfy $B$\footnote{If $B$ is intuitionistic, this can be simplified to only the $\succeq$-\emph{minimal} states that satisfy $A$. $B$ is intuitionistic~\cite{Reynolds02a} iff, if $B$ holds in a state $\sigma$, then $B$ holds in any state $\sigma'$ such that $\sigma' \succeq \sigma$.
In intuitionistic SL or in IDF, all assertions are intuitionistic.}.
\pout{Note that a}
An actual algorithm \pout{based on our package logic} need not explicitly compute this \peter{(possibly infinite)} set\pout{(which could well be infinite)}, but can instead track it symbolically.
If the algorithm finds a derivation of this reduction, it has proven that the difference between $\sigma_0$ and $\sigma_1$ is a valid footprint of the wand $A \wand B$,
since the logic is sound (\thmref{thm:soundness} below).

\subsubsection{Rules.}
\begin{figure}[t]
\begin{center}
    \[
    \begin{array}{c}

    \Inf[\mathit{Implication}]{\langle A,
    \lambda \sigma \ldotp pc(\sigma) \land b(\sigma), \Delta \rangle \rightsquigarrow \Delta'}
        {\langle b \Rightarrow A, pc, \Delta \rangle \rightsquigarrow \Delta'}

    \hspace{10mm}

    \Inf[\mathit{Star}]{\langle A_1, pc, \Delta_0 \rangle \rightsquigarrow \Delta_1}.
        {\langle A_2, pc, \Delta_1 \rangle \rightsquigarrow \Delta_2}
        {\langle A_1 * A_2, pc, \Delta_0 \rangle \rightsquigarrow \Delta_2}

    \\[3em]

    \Inf[\mathit{Atom}]{\forall (\sigma_A, \sigma_B) \in S \ldotp pc(\sigma_A) \Longrightarrow \sigma_A \succeq \mathit{choice}(\sigma_A, \sigma_B) \land \mathcal{B}(\mathit{choice}(\sigma_A, \sigma_B))}.
        { S_{\top} = \{ (\sigma_A \ominus \mathit{choice}(\sigma_A, \sigma_B) , \sigma_B \oplus \mathit{choice}(\sigma_A, \sigma_B) ) | (\sigma_A, \sigma_B) \in S \land pc(\sigma_A) \} }.
        { S_{\bot} = \{ (\sigma_A, \sigma_B) | (\sigma_A, \sigma_B) \in S \land \lnot pc(\sigma_A) \} }
        {\langle \mathcal{B}, pc, (\sigma, S) \rangle \rightsquigarrow (\sigma, S_{\top} \cup S_{\bot})}

    \\[3em]

    \Inf[\mathit{Extract}]
    {\sigma_0 = \sigma_1 \oplus \sigma_w}{\textsf{stable}(\sigma_w)}{\langle A, pc, (\sigma_1, S_1) \rangle \rightsquigarrow \Delta}.
    {S_1 = \{ (\sigma_A \oplus \sigma_w, \sigma_B) | (\sigma_A, \sigma_B) \in S_0 \land (\sigma_A \oplus \sigma_B) \# \sigma_w \} }
        {\langle A, pc, (\sigma_0, S_0) \rangle \rightsquigarrow \Delta}

    \end{array}
    \]
    \end{center}
    \caption{Rules of the package logic.}
    \label{fig:package_rules}
\end{figure}
\figref{fig:package_rules} presents the four rules of our logic, defining (via derivable reductions) how a configuration can be reduced to a context. There is a rule for each type of assertion $B$: \emph{Implication} for an implication, \emph{Star} for a separating conjunction, and \emph{Atom} for a semantic assertion. The logic also includes the rule \emph{Extract}, which represents a choice to extract permissions from the outer state and adds them to all pairs of states in the witness set.
In the following, we informally write \emph{reducing an assertion}
to refer to the process of deriving (in the logic) that the relevant configuration containing this assertion reduces to some context.

To reduce an implication $\mathcal{B} \Rightarrow A$, the rule \textit{Implication} conjoins the hypothesis $\mathcal{B}$ with the previous path condition, leaving $A$ to be reduced. Informally, this expresses that satisfying $pc \Rightarrow (b \Rightarrow A)$ is equivalent to satisfying $(pc \land b) \Rightarrow A$.

For a separating conjunction $A_1 * A_2$, the \emph{Star} rule expresses that both $A_1$ and $A_2$ must be reduced, in order to reduce $A_1 * A_2$; permissions used in the reduction of the first conjunct must not be used again, which is reflected by the threading-through of the intermediate context $\Delta_1$.\footnote{The order in the premises is unimportant since $A_1 * A_2$ and $A_2 * A_1$ are equivalent.}


The \emph{Atom} rule specifies how to prove that all states in $S^1$ (where $S$ is the witness set) satisfy the assertion $pc \Rightarrow \mathcal{B}$.
To understand the premises, consider a pair $(\sigma_A, \sigma_B) \in S$.
If $\sigma_A$ does not satisfy the path condition, \ie{} $\lnot pc(\sigma_A)$, then $\sigma_A$ \emph{does not} have to justify $\mathcal{B}$,
and thus the pair $(\sigma_A, \sigma_B)$ is left unchanged; this case corresponds to the set $S_\bot$.
Conversely, if $\sigma_A$ satisfies the path condition, \ie{} $pc(\sigma_A)$, then $\sigma_A$ must satisfy $\mathcal{B}$,
and the corresponding permissions must be transferred from $\sigma_A$ to $\sigma_B$.
Since some assertions may be satisfied in different ways, such as disjunctions, the algorithm has a choice in how
to satisfy $\mathcal{B}$, which might be different for each pair $(\sigma_A, \sigma_B)$.
This choice is represented by $\mathit{choice}(\sigma_A, \sigma_B)$,
which must satisfy $\mathcal{B}$ and be smaller or equal to $\sigma_A$.
We update the witness set by transferring $\mathit{choice}(\sigma_A, \sigma_B)$ from $\sigma_A$ to $\sigma_B$.
This second case corresponds to the set $S_\top$.
Note that the \emph{Atom} rule can be applied only if $\sigma_A$ satisfies $\mathcal{B}$, for all pairs $(\sigma_A, \sigma_B) \in S$ such that $pc(\sigma_A)$.
If not, a package algorithm must either first extract more permissions from the outer state with the \emph{Extract} rule, or fail.

The $\mathit{Extract}$ rule (applicable at any step of a derivation), expresses that we can extract permissions
(the state\footnote{We explain formally in \happref{D} the notion of a stable state, which is a technicality of our general state model; in standard SL\tout{ and many variants}, all states are stable.}
 $\sigma_w$) from the outer state $\sigma_0$, and combine them with the first element of each pair of states in the witness set.
Note that $(\sigma_A, \sigma_B)$ is removed from the witness set if $\sigma_A \oplus \sigma_B$ is not compatible with $\sigma_w$.
In such cases, adding $\sigma_w$ to $\sigma_A$ would create a pair in the witness set representing a state in which the wand cannot be applied. Consequently, there is no need to establish the right-hand side of the wand for this pair and our logic correspondingly removes it.
Finally, the rule requires that we reduce the assertion $A$ in the new context.

\gout{The}\gaurav{A} \gout{strategy of a}package algorithm\gaurav{'s strategy} is mostly reflected by \gout{when and}how it uses the \emph{Extract} rule.
To package \gout{the wand}$\code{acc(x.b, 1/2)} \wand \code{acc(x.b, 1/2)} * (\code{x.b} \Rightarrow \code{acc(x.f)})$ from \secref{subsec:motivation}
one algorithm might use this rule to extract permission to \code{x.f}; another might use it to extract
permission to \code{x.b} (if \code{x.b} had value false in the original state).
\tout{\appref{app:example-derivation} shows a full derivation of a reduction in our logic.}

\subsubsection{Example of a Derivation}
\label{app:example-derivation}

\newcommand{\scode}[1]{\code{\footnotesize{}#1}}

\thibault{Let us now} illustrate how these rules can be used to package the wand from \secref{subsec:motivation},
$w := \code{acc(x.f)} * (\code{x.f = y} \vee \code{x.f = z}) \wand \code{acc(x.f)} * \code{acc(x.f.g)}$.
We omit the path condition since it is always the trivial condition ($\lambda \sigma \ldotp \top)$.
Assume that the outer state $\sigma_0$ is the addition of $\sigma_{yz}$, a state that contains permission to \code{y.g} and \code{z.g}, and $\sigma_1$.
$S_0 := \{ (\sigma_A, e) \mid \sigma_A \in \Sigma \land \sigma_A \models \code{acc(x.f)} * (\code{x.f = y} \vee \code{x.f = z}) \}$ is the initial witness set.
We show below a part of a proof that
$\langle \scode{acc(x.f) * acc(x.f.g)}, (\sigma_0, S_0) \rangle \rightsquigarrow (\sigma_1, S_3)$ is correct,
and thus that $\sigma_{yz}$ is a correct footprint of the wand $w$
(since $\sigma_0 = \sigma_1 \oplus \sigma_{yz}$):

\vspace{5mm}
\begin{adjustbox}{max width=\textwidth}
\Inf[Star]{
    \Inf[Atom]
    { \ldots }
    { \langle \scode{acc(x.f)}, (\sigma_0, S_0) \rangle \rightsquigarrow (\sigma_0, S_1) }
}
{
    \Inf[Extract]{
        \Inf[Atom]
        { \ldots }
        { \langle \scode{acc(x.f.g)}, (\sigma_1, S_2) \rangle \rightsquigarrow (\sigma_1, S_3) }
    }{\dagger}
    {\langle \scode{acc(x.f.g)}, (\sigma_0, S_1) \rangle \rightsquigarrow (\sigma_1, S_3)}
}
{\langle \scode{acc(x.f) * acc(x.f.g)}, (\sigma_0, S_0) \rangle \rightsquigarrow (\sigma_1, S_3) }
\end{adjustbox}

\vspace{5mm}


\gout{The}\gaurav{This} derivation\thibault{, which \gout{corresponds to}\gaurav{reflects} the \gaurav{package algorithm} \gout{strategy}\gaurav{that we will describe}\gout{described} in \secref{subsec:fixing},} can be read from bottom to top and from left to right.
Using the rule $\mathit{Star}$, we \gout{first}split the assertion\gout{\code{acc(x.f) * acc(x.f.g)}} into its two conjuncts,
\code{acc(x.f)} (on the left) and \code{acc(x.f.g)} (on the right).
We then handle \code{acc(x.f)} using the rule $\mathit{Atom}$.
\code{acc(x.f)} holds in the first element of each pair of $S_0$, since any state that satisfies the \gaurav{wand's} left-hand side\gout{of the wand} \gout{has permission to}\gaurav{owns} \code{x.f}.
Therefore, we use the rule $\mathit{Atom}$ with a $\mathit{choice}$ function that always chooses the relevant state with exactly full permission to \code{x.f}.
$S_1$ is the updated witness set where this permission to \code{x.f} has been transferred from the first to the second element of each pair of states.
Next, we \gout{want to}handle \code{acc(x.f.g)}\gout{using the rule $\mathit{Atom}$}.
\gaurav{We cannot do this directly using the rule $\mathit{Atom}$ from \gout{the witness set}$S_1$.}
\gout{However, we cannot do this directly from the witness set $S_1$:}
We know that, for each $(\sigma_A, \sigma_B) \in S_1$,
\code{x.f.g} evaluated in $\sigma_A$ is either \code{y} or \code{z},
but $\sigma_A$ \gout{does not have any permission to}\gaurav{owns neither} \code{y.g} \gout{or}\gaurav{nor} \code{z.g}.
\gout{Thus}\gaurav{So}, we transfer the permissions to both \code{y.g} and \code{z.g} from the outer state $\sigma_0$ to all states of $S_1^1$,
using the rule $\mathit{Extract}$, which results in the \gout{new}context $(\sigma_1, S_2)$\thibault{; $\dagger$ represents the three other premises of the rule, namely $\sigma_0 = \sigma_{yz} \oplus \sigma_1$, $\textsf{stable}(\sigma_{yz})$, and \gout{the definition of}$S_2$\gaurav{'s definition}.}
Finally, we apply the rule $\mathit{Atom}$ to prove
${ \langle \code{acc(x.f.g)}, (\sigma_1, S_2) \rangle \rightsquigarrow (\sigma_1, S_3) }$,
where the $\mathit{choice}$ function chooses for each pair the corresponding state that contains full permission to \code{x.f.g}.

\tout{This example follows the general proof search strategy,
which deconstructs the right-hand side using the rules $\mathit{Star}$ and $\mathit{Implication}$.
When we arrive at a semantic assertion, we first try to prove it directly from the witness set (using the rule $\mathit{Atom}$), as we did to prove \code{acc(x.f)}.
If we cannot prove this semantic assertion directly, we use the rule $\mathit{Extract}$ to extract the necessary resources from the outer state before using the rule $\mathit{Atom}$,
as we did to prove \code{acc(x.f.g)}.
This general proof search strategy corresponds to the package algorithm described in \secref{subsec:fixing}.
Different heuristics in when and how to use the rule \emph{Extract} lead to different proof strategies, and thus to different package algorithms, as explained in \secref{subsec:motivation}.
Given that our logic is sound, any package algorithm that corresponds to a proof search strategy in the package logic is sound.}

\subsection{Soundness and Completeness}\label{subsec:soundness-completeness}

We write $\vdash \langle B, pc, \Delta \rangle \rightsquigarrow \Delta'$ to express that a reduction can be derived in the logic.
As explained above, the goal of a package algorithm is to find a derivation of
$\langle B, \lambda \_ \ldotp \top, (\sigma, \{ (\sigma_A, e) \mid \sigma_A \in S_A \}) \rangle \rightsquigarrow (\sigma', S')$.
If it succeeds, then the difference between $\sigma'$ and $\sigma$ is a valid footprint of $A \wand B$,
since our package logic is sound.
In particular, we have proven the following soundness result in Isabelle/HOL:


\begin{theorem}\textbf{Soundness}. \label{thm:soundness}
    Let $B$ be a well-formed\protect\footnote{\label{note:wf}We formally define well-formedness in \happref{D}.
        Intuitively, a well-formed assertion roughly corresponds to a self-framing assertion as defined in \secref{subsec:idf_background}.
        } assertion. If
        \begin{enumerate}
            \item the set $S_A$ contains all states that satisfy $A$. \ie{} $\forall \sigma_A \ldotp \sigma_A \models A \Rightarrow \sigma_A \in S_A$,
            \item $\vdash \langle B, \lambda \_ \ldotp \top, (\sigma, \{ (\sigma_A, e) \mid \sigma_A \in S_A \}) \rangle \rightsquigarrow (\sigma', S')$, and
            \item at least one of the following conditions holds:
            \begin{enumerate}
                \item $B$ is intuitionistic
                \item For all $(\sigma_A, \sigma_B) \in S'$, $\sigma_A$ contains no permission (\ie{} $\sigma_A \oplus \sigma_A = \sigma_A$)
            \end{enumerate}
        \end{enumerate}
    then there exists a stable state $\sigma_w$ s.t.
    $\sigma = \sigma' \oplus \sigma_w$ and
    $\sigma_w$ is a footprint of $A \wand B$.
\end{theorem}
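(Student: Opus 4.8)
The plan is to prove soundness by first establishing a single \emph{core invariant lemma} about arbitrary reductions, by induction on the derivation, and then instantiating it to the initial configuration of the theorem. The core lemma I would prove is: whenever $\vdash \langle B, \pc, (\sigma_0, S_0) \rangle \rightsquigarrow (\sigma_1, S_1)$, the difference $\sigma_w := \sigma_0 \ominus \sigma_1$ is stable and satisfies $\sigma_0 = \sigma_1 \oplus \sigma_w$, and moreover there is a correspondence between $S_0$ and $S_1$ recording, for every pair $(\sigma_A, \sigma_B) \in S_0$ with $(\sigma_A \oplus \sigma_B) \mathbin{\#} \sigma_w$, a matching pair $(\sigma_A', \sigma_B') \in S_1$ such that $\sigma_A \oplus \sigma_w = \sigma_A' \oplus (\sigma_B' \ominus \sigma_B)$ and the freshly accumulated part $\sigma_B' \ominus \sigma_B$ satisfies $\pc \Rightarrow B$. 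This is a refinement of the informal guarantee stated just before the theorem; explicitly tracking the accumulated second component (rather than asserting $\sigma_A \oplus \sigma_w \models \pc \Rightarrow B$ outright) is what makes the statement strong enough to compose through the induction, since e.g.\ the \emph{Atom} step only adds a \emph{sub}state satisfying $\mathcal{B}$.

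For the induction I would treat the four rules in turn. The \emph{Implication} case is immediate: the footprint and contexts are unchanged, and $\pc \Rightarrow (b \Rightarrow A)$ is logically equivalent to $(\pc \land b) \Rightarrow A$, so the induction hypothesis transfers directly. The \emph{Extract} case adds a stable $\sigma_w'$ to the outer footprint and to every surviving first component; the work here is checking additivity of footprints via associativity and commutativity of $\oplus$, confirming stability is preserved, and verifying that the pairs dropped for incompatibility with $\sigma_w'$ are exactly those carrying no obligation. The \emph{Atom} case is where $B = \mathcal{B}$ is discharged: for each pair with $\pc(\sigma_A)$ the chosen witness $\mathit{choice}(\sigma_A,\sigma_B) \preceq \sigma_A$ satisfies $\mathcal{B}$ and is moved into $\sigma_B$, so the accumulated component gains exactly a state satisfying $\mathcal{B}$, while pairs with $\lnot \pc(\sigma_A)$ are untouched, matching the $\pc \Rightarrow \mathcal{B}$ obligation. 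The \emph{Star} case is the bookkeeping-heavy one: I would compose the two sub-derivations, set the total footprint to $\sigma_w^1 \oplus \sigma_w^2$ (where $\sigma_w^i$ are the premise footprints), and chain the per-pair correspondences so that the accumulated component from the first premise (satisfying $A_1$) composes with that of the second (satisfying $A_2$) to satisfy $A_1 * A_2$; the delicate points are that the intermediate context $\Delta_1$ threads the already-assembled resources correctly and that the two chained compatibility conditions collapse into the single condition $(\sigma_A \oplus \sigma_B) \mathbin{\#} (\sigma_w^1 \oplus \sigma_w^2)$.

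With the core lemma in hand I would instantiate it at the initial configuration $\langle B, \lambda\_.\top, (\sigma, \{(\sigma_A, e) \mid \sigma_A \in S_A\}) \rangle \rightsquigarrow (\sigma', S')$, giving a stable $\sigma_w$ with $\sigma = \sigma' \oplus \sigma_w$. For any $\sigma_A \models A$ we have $\sigma_A \in S_A$ by hypothesis~(1), and since the initial accumulator is $e$, the survival condition $(\sigma_A \oplus e) \mathbin{\#} \sigma_w$ is just $\sigma_A \mathbin{\#} \sigma_w$; the lemma then yields a final pair $(\sigma_A', \sigma_B') \in S'$ with $\sigma_A \oplus \sigma_w = \sigma_A' \oplus \sigma_B'$ and $\sigma_B' \models B$ (as $\pc$ is $\top$ and $\sigma_B' \ominus e = \sigma_B'$). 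It remains to lift $\sigma_B' \models B$ to $\sigma_A \oplus \sigma_w \models B$, which is exactly where hypothesis~(3) enters: if $B$ is intuitionistic, upward-closure along $\sigma_A \oplus \sigma_w \succeq \sigma_B'$ gives the result immediately; otherwise condition~(3b) forces the leftover $\sigma_A'$ (a first component of a pair in $S'$) to carry no permission, so $\sigma_A' \oplus \sigma_B'$ adds only value information, and well-formedness (self-framing) of $B$ ensures $B$ is insensitive to such additions, again giving $\sigma_A \oplus \sigma_w \models B$. Either way, for every $\sigma_A \models A$ compatible with $\sigma_w$ we obtain $\sigma_A \oplus \sigma_w \models B$, which is precisely the definition of $\sigma_w$ being a footprint of $A \wand B$.

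I expect the \emph{Star} case of the core lemma to be the main obstacle: formulating the induction invariant so that the two premises compose cleanly requires getting the partial-commutative-monoid algebra exactly right (footprint additivity through interleaved $\ominus$ and $\oplus$, stability of sums, and the fusion of the two compatibility side-conditions into one), and this is the step most sensitive to the precise state-model definitions. A secondary subtlety is the final lifting under condition~(3b), where the interaction between no-permission (idempotent) states and the well-formedness of $B$ must be handled carefully, so that adjoining value-only information to a $B$-satisfying state provably preserves satisfaction.
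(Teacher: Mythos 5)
Your overall strategy --- a strengthened per-pair invariant proved by induction on the derivation, tracking for each surviving pair of the initial witness set a matching pair in the final one together with the increment added to its second component, then instantiating at the initial configuration and using condition~(3) to lift satisfaction from that increment to $\sigma_A \oplus \sigma_w$ --- is exactly the paper's proof structure, including the footprint-additivity bookkeeping and the purity argument in case~(3b). However, the precise invariant you state is not provable as written, and the flaw sits exactly where the paper's own proof inserts a correction. You require the increment $\sigma_B' \ominus \sigma_B$ to satisfy $\pc \Rightarrow B$, with both the path condition and $B$ evaluated at that increment. Consider reducing $b \Rightarrow \mathcal{B}$ by \emph{Implication} followed by \emph{Atom}: for a pair whose first component falsifies $b$, the \emph{Atom} rule leaves the pair untouched, so the increment is $e$. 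Your invariant then demands $e \models (\pc \land b) \Rightarrow \mathcal{B}$; but well-formedness only makes $b$ \emph{antitone} under addition of pure resources (it requires $\mathit{monoPure}(\lnot b)$), so $b$ may well hold at $e$ (where it is ``undefined'' and hence encoded as satisfied) even though it fails at $\sigma_A$, and you would then owe $e \models \mathcal{B}$, which is false in general. A related problem breaks the \emph{Star} case: knowing the path condition holds at the combined increment $\delta_1 \oplus \delta_2$ does not let you conclude it holds at $\delta_1$ and $\delta_2$ separately, since the differences are not pure.

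The paper's strengthened statement avoids both problems by (i) evaluating the path condition at $|\sigma_A'|$, the core of the \emph{final} first component --- which carries the heap-value information against which $b$ must be judged and is a single reference point for every step of the derivation touching that pair --- and (ii) conjoining that pure core to the increment before checking the assertion, i.e.\ requiring $\pc(|\sigma_A'|) \Longrightarrow |\sigma_A'| \oplus (\sigma_B' \ominus \sigma_B) \models B$. With this reformulation the skipped-pair case is discharged vacuously (because $\pc(|\sigma_A'|)$ is genuinely false there), the \emph{Star} case composes because both sub-increments are judged against the same core, and your final instantiation still goes through unchanged, since $|\sigma_A'|$ is pure and your (3a)/(3b) case analysis already handles adjoining pure material. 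So the missing idea is not the shape of the key lemma but where the path condition and the residual assertion are evaluated; as stated, your induction would fail at the \emph{Implication}/\emph{Atom} interaction.
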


The third premise shows that, in an intuitionistic SL or in IDF,
the correspondence between a derivation in the logic and a valid footprint of a wand is straightforward (case (a)).
However, in classical SL, one must \tout{in general }additionally check that all permissions in the witness set have been consumed (case (b)).

We have also proved in Isabelle/HOL that our package logic is complete, \ie{} \emph{any} valid footprint
can be computed via a derivation in our package logic:

\begin{theorem}\textbf{Completeness}.
    \label{thm:completeness}
    Let $B$ be a well-formed\footnotemark[\getrefnumber{note:wf}] 
assertion.
    If
        $\sigma_w$ is a stable footprint of $A \wand B$, and
        $\sigma = \sigma' \oplus \sigma_w$,
    then there exists a witness set $S'$ such that
    $\vdash \langle B, \lambda \_ \ldotp \top, (\sigma, \{ (\sigma_A, e) \mid \sigma_A \in S_A \}) \rangle \rightsquigarrow (\sigma', S')$.
\end{theorem}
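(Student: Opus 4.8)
The plan is to build a derivation of a fixed shape: a single top-level application of \emph{Extract} that removes exactly the hypothesised footprint $\sigma_w$ from the outer state, followed by a sub-derivation that reduces $B$ without touching the outer state again. Applying \emph{Extract} with $\sigma_w$ is legal since $\textsf{stable}(\sigma_w)$ and $\sigma = \sigma' \oplus \sigma_w$, and it reduces the goal to deriving $\vdash \langle B, \lambda\_.\top, (\sigma', S_1)\rangle \rightsquigarrow (\sigma', S')$, where $S_1 = \{(\sigma_A \oplus \sigma_w, e) \mid \sigma_A \in S_A \land \sigma_A \# \sigma_w\}$. By the definition of a footprint of $A \wand B$, every first component $\sigma_A \oplus \sigma_w$ of a pair in $S_1$ satisfies $B$; the pairs with $\sigma_A$ and $\sigma_w$ incompatible, for which the wand could never be applied, are exactly those \emph{Extract} discards. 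The remaining task is therefore to show that, once every first component satisfies $B$, the assertion $B$ can be reduced leaving the outer state fixed at $\sigma'$.

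I would isolate this as a strengthened induction on the structure of $B$: if $B$ is well-formed and for every pair $(\rho,\tau) \in S$ with $pc(\rho)$ there is a sub-state $w_{(\rho,\tau)} \preceq \rho$ with $w_{(\rho,\tau)} \models B$, then $\vdash \langle B, pc, (\sigma,S)\rangle \rightsquigarrow (\sigma, S')$, where $S'$ is $S$ with each witness $w_{(\rho,\tau)}$ moved from the first to the second component (for $pc$-satisfying pairs). Phrasing the hypothesis in terms of transferable sub-state witnesses rather than merely $\rho \models B$ is essential, because the \emph{Star} case must split the witness and, in a classical (non-intuitionistic) separation algebra, $\rho \models A_1 * A_2$ does not entail $\rho \models A_1$. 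The three cases mirror the three syntactic rules. For $B = \mathcal{B}$ a semantic assertion, \emph{Atom} applies with $\mathit{choice}(\rho,\tau) := w_{(\rho,\tau)}$, which by assumption lies below $\rho$ and satisfies $\mathcal{B}$. For $B = A_1 * A_2$, the semantics of $*$ decomposes each witness as $w = w_1 \oplus w_2$ with $w_i \models A_i$; reducing $A_1$ via the induction hypothesis transfers $w_1$, after which $w_2 \preceq \rho \ominus w_1$ still witnesses $A_2$, so a second appeal reduces $A_2$ and \emph{Star} threads the intermediate context as required. For $B = b \Rightarrow A$, \emph{Implication} strengthens the path condition to $pc \land b$, and for the pairs now satisfying it one must exhibit a witness for $A$.

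The main obstacle I expect is precisely this last case, and it is where well-formedness is indispensable. A witness $w \models (b \Rightarrow A)$ only guarantees $w \models A$ when $b(w)$ holds, whereas \emph{Implication} branches on $b$ evaluated at the full first component $\rho$; a carelessly chosen witness could satisfy the implication only vacuously (with $b(w)$ false) even though $b(\rho)$ is true. The resolution is to choose, for each pair, a witness that also frames $b$, so that $b(w) = b(\rho)$. The well-formedness (self-framing) hypothesis on $B$ is exactly what guarantees such a witness exists and, crucially, that the choice survives the nesting introduced by \emph{Star}, where witnesses become proper sub-states of the current first component and the truth value of inner conditions must not drift. I would therefore first prove a framing lemma that extracts these self-framing witnesses from well-formedness, carry it through the induction as part of the witness-choice invariant, and finally compose the induction lemma with the initial \emph{Extract} step to obtain the claimed derivation. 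In the intuitionistic/IDF instantiation the conditions are automatically framed by the accompanying permissions, so this step is routine; it is the general classical setting that makes the framing argument the technical crux.
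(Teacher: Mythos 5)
Your proposal matches the paper's own argument: the paper likewise proves completeness in two steps, first adding the entire footprint $\sigma_w$ to the witness set with a single application of \emph{Extract}, and then showing by induction on $B$ that the remaining three rules suffice to reduce $B$ without further touching the outer state. Your additional care about the strengthened induction hypothesis (transferable sub-state witnesses for the \emph{Star} case) and the framing of implication conditions fills in details the paper delegates to its Isabelle/HOL mechanisation, but the route is the same.
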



\subsection{A Sound Package Algorithm}\label{subsec:fixing}

We now describe an automatic package algorithm that corresponds to a proof search strategy in our package logic, and which is thus sound.
To convey the main ideas, consider packaging a wand of the shape $A \wand B_1 * \ldots * B_n$.\footnote{In \happref{I}, we \gout{additionally}\gaurav{also} show how our package algorithm handles implications.}
Our algorithm traverses the assertion $B_1 * \ldots * B_n$ from left to right, similarly to the FIA approach; this traversal is justified by repeated applications of the rule \textit{Star}.
Assume at some point during this traversal that the current context is $(\sigma_0, S)$.
When we encounter the assertion $B_i$, we have two possible cases:
\begin{enumerate}
    \item All states $\sigma_A \in S^1$ satisfy $B_i$,
    which means that the permissions (or values) required by $B_i$ are provided by the left-hand side of the wand.
    In this case, for each pair $(\sigma_A, \sigma_B) \in S$, we transfer permissions (and the corresponding values) to satisfy $B_i$ from $\sigma_A$ to $\sigma_B$, using the rule \textit{Atom}.
    Note that the transferred permissions might be different for each pair $(\sigma_A, \sigma_B)$.
    This gives us a new witness set $S'$, while the outer state $\sigma_0$ is left unchanged.
    We must then handle the next assertion $B_{i+1}$ in the context $(\sigma_0, S')$.
    \item There is at least one pair $(\sigma_A, \sigma_B) \in S$ such that $B_i$ does not hold in $\sigma_A$.
    In this case, the algorithm fails if combining the permissions (and values) contained in the outer state
    with each $\sigma_A \in S^1$ is not sufficient to satisfy $B_i$.
    Otherwise, we apply the rule \textit{Extract} to transfer permissions from the outer state $\sigma_0$
    to each state $\sigma_A$ in $S^1$ such that $B_i$ holds in $\sigma_A$.
    This gives us a new context $(\sigma_0', S')$.
    We can now apply the first case with the context $(\sigma_0', S')$.
\end{enumerate}

\section{Using the Logic with Combinable Wands}\label{sec:combinable}
Extending SL with fractional permissions~\cite{Boyland} is well-known to be useful for reasoning about heap-manipulating concurrent programs with shared state.
In this setting, permission amounts are generalised to fractions $0 \le p \le 1$.
Reading a heap location is permitted if $p > 0$, and writing if $p = 1$, which permits concurrent reads and ensures exclusive writes. The assertion
\code{acc(x.f, $p$)} holds in a state that has \emph{at least} $p$ permission to \code{x.f}.
A permission amount $p + q$ to a heap location \code{x.f} can be split into a permission amount $p$ and a permission amount $q$,
i.e.\ $\code{acc(x.f, p + q)} \models \code{acc(x.f, p) * acc(x.f, q)}$,
and these two permissions can be recombined, i.e.\ $\code{acc(x.f, p) * acc(x.f, q)} \models \code{acc(x.f, p + q)}$.

This concept has been generalised~\cite{Boyland10,JacobsPiessens11,Dinsdale-YoungDGPV10,LeHobor18,Brotherston20} to \emph{fractional assertions} $A^p$, representing a fraction $p$ of $A$.
$A^p$ holds in a state $\sigma$ iff there exists a state $\sigma_A$ in which $A$ holds and $\sigma$ is obtained from $\sigma_A$ by multiplying all permission amounts held by $p$
\cite{LeHobor18,Brotherston20}; in this case, we write
$\sigma = p \cdot \sigma_A$.
For example, $\code{acc(x.f)}^p \equiv \code{acc(x.f, p)}$, and $\code{Tree(x)}^p$ (where \code{Tree} is the predicate defined in \figref{fig:leftLeaf})
expresses $p$ permission to all nodes of the tree rooted in \code{x}.

Using fractional assertions, one might specify a function \code{find},
which searches a binary tree and yields a subtree whose root contains key \code{key}, as follows~\cite{Brotherston20}:\\
$
\{ \; \code{Tree(x)}^p \; \} \; \code{find(x, key)} \; \{ \; \lambda \code{ret}. \left( \code{Tree(ret)} * (\code{Tree(ret)} \wand \code{Tree(x)} ) \right)^p \; \}
$,
in which \code{ret} corresponds to the return value of \code{find}.
This postcondition is similar to the loop invariant in \figref{fig:leftLeaf}, except that it needs only a fraction $p$ of \code{Tree(x)}.
A number of automatic SL verifiers, such as Caper~\cite{Dinsdale-YoungP17}, Chalice~\cite{LeinoMuellerSmans09}, VerCors~\cite{BlomDHO17}, VeriFast~\cite{JacobsSPVPP11}, and Viper~\cite{MuellerSchwerhoffSummers16},
support fractional assertions in some form.


\subsubsection{Combinable Assertions.}
While it is always possible to split an assertion $A^{p+q}$ into $A^p * A^q$, recombining $A^p * A^q$ into $A^{p+q}$ is sound only under some conditions,
for example~\cite{LeHobor18} if $A$ is \emph{precise} (in the usual SL sense~\cite{Reynolds02a}).
We say that $A$ is \emph{combinable} iff the entailment $A^p * A^q \models A^{p+q}$ holds for any two positive fractions $p$ and $q$ such that $p + q \le 1$.
As an example, \code{acc(x.f)} is combinable, but $\code{acc(x.f)} \vee \code{acc(x.g)}$ is not because
a state containing half permission to both \code{x.f} and \code{x.g} satisfies
$(\code{acc(x.f)} \vee \code{acc(x.g)})^{0.5} * (\code{acc(x.f)} \vee \code{acc(x.g)})^{0.5}$,
but not $\code{acc(x.f)} \vee \code{acc(x.g)}$.
Combinable assertions are particularly useful to reason about concurrent programs, for instance, to combine the postconditions of parallel branches when they terminate~\cite{Brotherston20}.

However, a magic wand is in general \emph{not} combinable, as we show below.
This is problematic for SL verifiers; they cannot soundly combine wands, nor predicates that could possibly contain wands in their bodies. One way to prevent the latter is to forbid magic wands in predicate bodies entirely,
but this limits the common usage of predicates to abstract over general assertions in specifications~\cite{ParkinsonBierman05}.
Another solution is to disallow combining fractional instances of a predicate if its body contains a wand, which means requiring additional annotations to ``taint'' such predicates transitively.
This is overly restrictive for wands which are actually combinable and complicates reasoning about abstract predicate families~\cite{ParkinsonBierman05}.

To address this issue, we propose a novel restriction of the wand, called \emph{combinable wand} (we use \emph{standard wand} to refer to the usual, unrestricted connective).
Unlike standard wands in general, a combinable wand is always combinable if its right-hand side is combinable.
Thus, by only using combinable wands instead of standard wands, all assertions in logics such as those employed by VerCors and Viper can be made combinable without any of the other
aforementioned restrictions regarding predicates.
\secref{sec:evaluation} shows that the restriction combinable wands impose is sufficiently weak for practical purposes.
Finally, footprints of combinable wands can be automatically inferred by package algorithms built on our package logic.
All results in this section have been proven in Isabelle/HOL.

\subsubsection{Standard Wands are not Combinable in General.}\label{subsec:explanation}

Even if $B$ is combinable, the standard wand $A \wand B$ is, in general, not. As an example, the wand $w := \code{acc(x.f, 1/2)} \wand \code{acc(x.g)}$ is not combinable,
because $w^{0.5} * w^{0.5} \not\models w$.
To see this, consider two states $\sigma_f$ and $\sigma_g$, containing full permissions to only \code{x.f} and \code{x.g}, respectively.
Both states are valid footprints of $w$, \ie{} $\sigma_f \models w$ (because $\sigma_f$ is incompatible with all states that satisfy the left-hand side)
and $\sigma_g \models w$ (because $\sigma_g$ entails the right-hand side)\@.
Thus, by definition, $0.5 \cdot \sigma_f \models w^{0.5}$ and $0.5 \cdot \sigma_g \models w^{0.5}$.
However, $0.5 \cdot \sigma_f \oplus 0.5 \cdot \sigma_g$, i.e.\ a state with half permission to both \code{x.f} and \code{x.g}, is \emph{not} a valid footprint of $w$, and thus $w^{0.5} * w^{0.5} \not\models w$.

Intuitively, $w$ is not combinable because one of its footprints, $\sigma_f$, is incompatible with the left-hand side of the wand,
but becomes compatible when the footprint is
scaled down to a fraction.\tfootnote{Maybe not needed: In
this example, $\sigma_f$ is incompatible with \emph{all} states that satisfy the left-hand side.
However, footprints that are incompatible with only \emph{some} states that satisfy the left-hand side might also
render wands not combinable, as we show in \appref{app:example-not-combinable}.}
After scaling, the wand no longer holds trivially, and the footprint does not necessarily establish the right-hand side\@.

To make this intuition more precise, we introduce the notion of \emph{scalable footprints}.
For a state $\sigma$, we define $\mathit{scaled}(\sigma)$ to be the set of copies of $\sigma$ multiplied by any fraction $0 < \alpha \le 1$,
\ie{} $\mathit{scaled}(\sigma) := \{ \alpha \cdot \sigma \mid 0 < \alpha \le 1 \}$.
A footprint $\sigma_w$ is \emph{scalable \wrt{} a state} $\sigma_A$ iff
either (1) $\sigma_A$ is compatible with \emph{all} states from $\mathit{scaled}(\sigma_w)$,
or (2) $\sigma_A$ is compatible with \emph{no} state in $\mathit{scaled}(\sigma_w)$.
A footprint is \emph{scalable for a wand} $A \wand B$ iff it is scalable w.r.t. all states that satisfy A.
Intuitively, this means that the footprint does not ``jump'' between satisfying the wand trivially and having to satisfy the right-hand side\@.
In the above example, $\sigma_g$ is a scalable footprint for $w$, but $\sigma_f$ is not.

\subsubsection{Making Wands Combinable.}\label{subsec:solution-combinable}

The previous paragraphs show that, even if $B$ is combinable,
the standard wand $A \wand B$ is in general not combinable because it can be satisfied by non-scalable footprints.
Therefore, we define a novel restricted interpretation for wands that \emph{forces} footprints to be scalable, in the following sense.
The restricted interpretation of a wand accepts all scalable footprints, and transforms non-scalable footprints before checking whether they actually satisfy the wand.
We call a wand with this restricted interpretation a \emph{combinable wand},
and write $A \cwand B$ to differentiate it from the standard wand $A \wand B$.

For standard wands, \emph{any} state $\sigma_w$ is a footprint of $A \wand B$ iff,
for all states $\sigma_A$ that satisfy $A$, $\sigma_A \# \sigma_w \Rightarrow \sigma_A \oplus \sigma_w \models B$.
We obtain the definition of combinable wands by replacing $\sigma_w$ with a (possibly smaller) state $\mathcal{R}(\sigma_A, \sigma_w)$ that is scalable w.r.t. $\sigma_A$.
$\mathcal{R}(\sigma_A, \sigma_w)$ is defined as $\sigma_w$ if \emph{no} state in $\mathit{scaled}(\sigma_w)$ is compatible with any $\sigma_A$;
in that case, condition (2) of scalable footprints holds for $\mathcal{R}(\sigma_A, \sigma_w)$ w.r.t. $\sigma_A$.
Otherwise, $\mathcal{R}(\sigma_A, \sigma_w)$ is obtained by removing just enough permissions from  $\sigma_w$ to ensure that \emph{all} states in $\mathit{scaled}(\mathcal{R}(\sigma_A, \sigma_w))$ are compatible with $\sigma_A$,
which ensures that condition (1) holds for $\mathcal{R}(\sigma_A, \sigma_w)$ w.r.t. $\sigma_A$.

To formally define $\mathcal{R}(\sigma_A, \sigma_w)$, we
fix a concrete separation algebra (formally defined in \happref{G}),
whose states are pairs $(\pi, h)$ of a \emph{permission mask} $\pi$, which maps heap locations to fractional permissions,
and a partial heap $h$, which maps heap locations to values.

\begin{definition}\label{def:restricted-footprint}
    Let $(\pi_A, h_A)$ and $(\pi_w, h_w)$ be two states,
    and let $\pi'_w$ be the permission mask such that $\forall l \ldotp \pi_w'(l) = \min(\pi_w(l), 1 - \pi_A(l))$. Then
    $$
    \mathcal{R}((\pi_A, h_A), (\pi_w, h_w)) =
    \begin{cases}
        (\pi_w, h_w) & \text{if } \forall \sigma \in \mathit{scaled}((\pi_w, h_w)) \ldotp \lnot (\pi_A, h_A)\# \sigma \\
        (\pi_w' , h_w) & \text{otherwise}
    \end{cases}
    $$

\noindent
    The \emph{combinable wand} $A \cwand B$ is then interpreted as follows:
    $$
    \sigma_w \models A \cwand B \Longleftrightarrow
    \left( \forall \sigma_A \ldotp \sigma_A \models A \land \sigma_A \# \mathcal{R}(\sigma_A, \sigma_w)
    \Longrightarrow
    \sigma_A \oplus \mathcal{R}(\sigma_A, \sigma_w)
    \models B \right)
    $$
\end{definition}

The following theorem (proved in Isabelle/HOL) shows some key properties of combinable wands.

\begin{theorem}\label{thm:combinable}
    Let $B$ be an intuitionistic assertion.
    \begin{enumerate}
        \item If $B$ is combinable, then $A \cwand B$ is combinable.
        \item $A \cwand B \models A \wand B$.
        \item If $A$ is a binary assertion, then $A \cwand B$ and $A \wand B$ are equivalent.
    \end{enumerate}
\end{theorem}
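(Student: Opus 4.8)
The plan is to prove the three parts in the order 2, 3, 1, since parts 2 and 3 rest on a single elementary observation about $\mathcal{R}$ that also reappears in part 1. The key preliminary fact is: whenever $\sigma_A \# \sigma_w$, we have $\mathcal{R}(\sigma_A, \sigma_w) = \sigma_w$. Indeed, $\sigma_A \# \sigma_w$ means $\sigma_w = 1 \cdot \sigma_w \in \mathit{scaled}(\sigma_w)$ is compatible with $\sigma_A$, so we are in the ``otherwise'' branch of Definition~\ref{def:restricted-footprint}, and compatibility gives $\pi_w(l) \le 1 - \pi_A(l)$ at every location $l$, hence $\pi_w'(l) = \min(\pi_w(l), 1-\pi_A(l)) = \pi_w(l)$.

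Part 2 is then immediate: assuming $\sigma_w \models A \cwand B$, take any $\sigma_A \models A$ with $\sigma_A \# \sigma_w$. By the preliminary fact $\mathcal{R}(\sigma_A, \sigma_w) = \sigma_w$, so the premise $\sigma_A \# \mathcal{R}(\sigma_A, \sigma_w)$ of the combinable-wand definition holds, yielding $\sigma_A \oplus \sigma_w \models B$; this is exactly the standard-wand condition. For part 3 I would show that binarity of $A$ forces $\mathcal{R}(\sigma_A, \sigma_w) = \sigma_w$ for \emph{every} $\sigma_A \models A$ and every $\sigma_w$, which makes the two definitions literally coincide. Using the appendix characterisation that a binary assertion is satisfied only by states whose permission mask takes values in $\{0,1\}$, I split on the branches of $\mathcal{R}$: the first branch already returns $\sigma_w$; in the second, at any location with $\pi_A(l)=1$ compatibility of some scaled copy of $\sigma_w$ forces $\pi_w(l)=0$, while at locations with $\pi_A(l)=0$ we have $\min(\pi_w(l),1)=\pi_w(l)$, so again $\pi_w'=\pi_w$. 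Combined with part 2 this yields the equivalence.

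The real work is part 1. Given $\sigma \models (A \cwand B)^p * (A \cwand B)^q$, I decompose $\sigma = p\cdot w_p \oplus q\cdot w_q$ with $w_p, w_q \models A \cwand B$, set $a = p/(p+q)$ and $b = q/(p+q)$ (so $a+b=1$), and take the candidate footprint $w = a\cdot w_p \oplus b\cdot w_q$. This $w$ is a valid state (its mask is a convex combination of masks bounded by $1$) and satisfies $(p+q)\cdot w = \sigma$, so it suffices to show $w \models A \cwand B$, which then gives $\sigma \models (A\cwand B)^{p+q}$. Fixing $\sigma_A \models A$ with $\sigma_A \# \mathcal{R}(\sigma_A, w)$, the premise forces the reducing branch of $\mathcal{R}$ for $w$ (the first branch would give $\mathcal{R}(\sigma_A,w)=w$ with $\sigma_A$ incompatible with $w$), and compatibility of some scaled copy of $w$ with $\sigma_A$ propagates to $w_p$ and $w_q$, so $\sigma_A \# \mathcal{R}(\sigma_A, w_p)$ and $\sigma_A \# \mathcal{R}(\sigma_A, w_q)$. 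Applying $w_p, w_q \models A \cwand B$ gives $X_p := \sigma_A \oplus \mathcal{R}(\sigma_A, w_p) \models B$ and $X_q := \sigma_A \oplus \mathcal{R}(\sigma_A, w_q) \models B$. Scaling, $a\cdot X_p \models B^a$ and $b\cdot X_q \models B^b$; a pointwise check shows $a\cdot X_p \# b\cdot X_q$, so combinability of $B$ gives $a\cdot X_p \oplus b\cdot X_q \models B^a * B^b \models B^{a+b} = B^1 = B$. Finally, a permission-mask comparison shows $\sigma_A \oplus \mathcal{R}(\sigma_A, w) \succeq a\cdot X_p \oplus b\cdot X_q$, and intuitionisticity of $B$ transfers satisfaction of $B$ upward, establishing $w \models A \cwand B$.

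The main obstacle is that last mask comparison, because $\mathcal{R}$ truncates with a $\min$ against $1-\pi_A$. Comparing $\pi_w'(l) = \min(a\pi_{w_p}(l)+b\pi_{w_q}(l),\,1-\pi_A(l))$ with $a\pi_{w_p}'(l)+b\pi_{w_q}'(l) = a\min(\pi_{w_p}(l),1-\pi_A(l)) + b\min(\pi_{w_q}(l),1-\pi_A(l))$ requires a three-way case analysis (neither, exactly one, or both of $\pi_{w_p}(l),\pi_{w_q}(l)$ exceeding $1-\pi_A(l)$) to verify the inequality $\pi_w'(l) \ge a\pi_{w_p}'(l)+b\pi_{w_q}'(l)$ needed for $\succeq$; the decisive subcase relies on $b(1-\pi_A) \ge b\pi_{w_q}$. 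The remaining care is bookkeeping that the heap components stay consistent (scaling leaves heaps unchanged, and all states share $\sigma_A$'s and $w$'s values), which I expect to be routine given the concrete state model of Appendix~\ref{app:state_model}.
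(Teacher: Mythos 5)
Your preliminary observation --- that $\sigma_A \# \sigma_w$ forces the second branch of Def.~\ref{def:restricted-footprint} and collapses the $\min$, so $\mathcal{R}(\sigma_A,\sigma_w)=\sigma_w$ --- is correct, and parts~2 and~1 go through essentially as you describe. In part~1, the decomposition $w = a\cdot w_p \oplus b\cdot w_q$ with $a+b=1$, the propagation of compatibility of a scaled copy of $w$ down to scaled copies of $w_p$ and $w_q$, the use of combinability of $B$ at exponents summing to $1$, and the final appeal to intuitionisticity are all sound. The mask comparison you flag as the main obstacle is in fact immediate: $a\min(\pi_{w_p}(l),1-\pi_A(l))+b\min(\pi_{w_q}(l),1-\pi_A(l))$ is bounded both by $a\pi_{w_p}(l)+b\pi_{w_q}(l)$ and by $1-\pi_A(l)$, hence by their minimum, so no three-way case split is needed. (The paper defers the proof entirely to the Isabelle/HOL mechanisation, so there is no textual proof to compare against; for these two parts your route is the natural one.)

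Part~3, however, has a genuine gap. \appref{app:state_model} does \emph{not} define a binary assertion as one satisfied only by states with $\{0,1\}$-valued masks; it defines $A$ to be binary iff $\langle A\rangle$ is closed under the binarisation map $(\pi,h)\mapsto(\mathit{bin}(\pi),h)$. The predicate \code{Tree(x)} (or the assertion holding in every state) is binary in this sense yet is satisfied by states carrying fractional permission to irrelevant locations, so your claim that binarity forces $\mathcal{R}(\sigma_A,\sigma_w)=\sigma_w$ for every $\sigma_A\models A$ fails: take $\pi_A(l)=1/2$ and $\pi_w(l)=3/4$ at some location $l$ with a scaled copy of $\sigma_w$ compatible with $\sigma_A$; then $\pi_w'(l)=1/2\neq\pi_w(l)$. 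Hence the two semantics do not literally coincide, and the direction $A\wand B\models A\cwand B$ still needs an argument; the tell is that your part~3 never uses the standing hypothesis that $B$ is intuitionistic. The repair: given $\sigma_A\models A$ with $\sigma_A\#\mathcal{R}(\sigma_A,\sigma_w)$ (which, as in part~1, forces the second branch, so some $\alpha\cdot\sigma_w$ is compatible with $\sigma_A$ and therefore $\pi_w(l)=0$ wherever $\pi_A(l)=1$), pass to $\sigma_A':=(\mathit{bin}(\pi_A),h_A)$, which satisfies $A$ by binarity and is compatible with the \emph{untruncated} $\sigma_w$; apply $A\wand B$ to obtain $\sigma_A'\oplus\sigma_w\models B$; then verify $\sigma_A\oplus\mathcal{R}(\sigma_A,\sigma_w)\succeq\sigma_A'\oplus\sigma_w$ by a pointwise mask comparison and conclude with intuitionisticity of $B$.
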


Property~1 expresses that combinable wands constructed from combinable assertions are combinable, which enables verification methodologies underlying
tools such as VerCors and Viper to support flexible combinations of wands and predicates (as motivated at the start of this section). Property~2 implies that $A * (A \cwand B) \models B$, that is, combinable wands can be applied like standard wands. Property~3 states that combinable wands pose no restrictions if the left-hand side is binary, that is, if it can be expressed without fractional permissions (formally defined in \happref{G}). For example, the predicate \code{Tree(x)} from \figref{fig:leftLeaf} is binary, which implies that the wands $\code{Tree(y)} \cwand \code{Tree(x)}$ and $\code{Tree(y)} \wand \code{Tree(x)}$ are equivalent. This property is an important reason for why combinable wands are expressive enough for practical purposes, as we further evidence in \secref{sec:evaluation}.

Footprints of combinable wands can be automatically inferred by \gout{package}algorithms built on our package logic.
We explain \thibault{(along with \gout{some}examples)} in \happref{H} how to lift the package logic presented in \secref{sec:package_logic}
to handle alternative definitions of allowable \gout{wand}footprints such as the restrictions imposed by Def.~\ref{def:restricted-footprint}.

\section{Evaluation}\label{sec:automation}\label{sec:evaluation}
We have implemented package algorithms for the standard wands and combinable wands in a custom branch of Viper's~\cite{MuellerSchwerhoffSummers16} verification condition generator (VCG).
Both are based on the package logic described in \secref{sec:package_logic}, adapted to the fractional permission setting.
Both algorithms automate the proof search strategy outlined in \secref{subsec:fixing}.
Viper's VCG translates Viper programs to Boogie~\cite{LeinoBoogie2} programs.
It uses a total-heap semantics of IDF~\cite{ParkinsonSummers12}, where Viper states include a heap and a permission mask (tracking  fractional permission amounts).
The heap and mask are represented in Boogie as maps; we also represent witness sets as Boogie maps.

We evaluate our implementations of the package algorithms on Viper's test suite and compare them to Viper's implementation of the FIA as presented in \secref{subsec:footprint_infer_base}. Our key findings are that our algorithms
(1)~enable the verification of almost all correct \code{package} operations.
(2)~correctly report \code{package} operations that are supposed to fail (in contrast to the FIA), and
(3)~have an acceptable performance overhead compared to the FIA\@.
Moreover, interpreting wands as combinable wands as explained in \secref{subsec:solution-combinable} has only a minor effect on the results, but correctly rejects attempts to package a non-combinable wand. This finding suggests that verifiers could improve their expressiveness by allowing flexible combinations of wands and predicates with only a minor completeness penalty.

For our evaluation, we considered all 85 files in the test suite for Viper's VCG \tout{that contain}\thibault{with} at least one \code{package} operation. From these 85 files, we removed 29 files containing features that our implementation does not yet support. 28 of these 29 files require proof scripts to guide the footprint inference, which are orthogonal to the concerns of this paper (see~\happref{J} for details).

\begin{table}[t]
\center
\begin{tabular}{lp{4mm}cp{4mm}cp{4mm}c}
 Algorithm && Expected result && Incorrectly verified && Spurious errors\\
\hline
FIA && 55 && 1 && 0 \\
S-Alg      && 51 && 0 && 5 \\
C-Alg      && 48 && 0 && 8 \\[2mm]
\end{tabular}
\caption{Verification results on our 56 benchmarks with the FIA, our algorithm for standard wands (S-Alg), and for combinable wands (C-Alg). For each algorithm, we report the number of correct verification results, false negatives, and false positives.}
\label{tbl:eval_classification}
\end{table}

Table~\ref{tbl:eval_classification} gives an overview of our results. These confirm that our algorithms for standard and combinable wands (S-Alg and C-Alg) do not produce false negatives, that is, are sound. In contrast, the FIA does verify an incorrect program (which is similar to the example in~\secref{subsec:footprint_infer_base}). While this is only a single unsound example, it is worth emphasing that (a) it comes from the pre-existing test suite of the tool itself, (b) the unsoundness was not known of until our work, and (c) soundness issues in a program verifier are critical to address; we show how to achieve this.

Compared with the FIA, our implementation reports a handful of false positives (spurious errors). For S-Alg, 3 out of 5 false positives are caused by missing features of our implementation (such as remembering a subset of the permissions that are inside predicate instances when manipulating predicates); these features could be straightforwardly added in the future.
The other 2 false positives are caused by S-Alg's strategy. In one, the only potential footprint prevents the wand from ever being applied; although technically a false positive, it seems useful to reject the wand and alert the user.
The other case is due to a coarse-grained heuristic applied by S-Alg that can be improved.

C-Alg reports the expected result in 48 benchmarks. Importantly, it correctly rejects one wand that indeed does not hold as a combinable wand. 5 of the 8 false positives are identical to those for S-Alg. In the other three benchmarks, the wands still do hold as combinable wands, but further extensions to C-Alg are required to handle them due to technical challenges regarding predicate instances. Once these extensions have been implemented, C-Alg will be as precise as S-Alg,  indicating that comparable program verifiers could switch to combinable wands to simply enable sound, flexible combinations with predicates.

To evaluate performance, we ran each of the three implementations 5 times on each of the 56 benchmarks on a Lenovo T480 with 32 GB of RAM and a i7-8550U 1.8 GhZ CPU, running on Windows 10.
We removed the slowest and fastest time, and then took the mean of the remaining 3 runs.
The FIA takes between 1 and 11 seconds per benchmark.
On average, S-Alg is 21\% slower than the FIA\@.
For 46 of the 56 examples, the increase is less than 30\%, and for 3 examples S-Alg is between a factor 2 and 3.4 slower.
The overhead is most likely due to the increased complexity of our algorithms, which track more states explicitly and require more quantified axioms in the Boogie encoding.
C-Alg is on average 10\% slower than S-Alg.
%
We consider the performance overhead of our algorithms to be acceptable, especially since wands occur much more frequently in our benchmarks than in average Viper projects, as judged by existing tests and examples. More representative projects will, thus, incur a much smaller slow-down.

\section{Related Work}\label{sec:related}
VerCors~\cite{BlomDHO17} and Viper~\cite{MuellerSchwerhoffSummers16} are to the best of our knowledge the only automatic SL verifiers that support magic wands.
Both employ \code{package} and \code{apply} ghost operations.
VerCors' package algorithm requires a user to manually specify a footprint whereas Viper infers footprints using the FIA, which is unsound as we show in~\secref{subsec:footprint_infer_base}.
Our package algorithm is as automatic as the FIA but is sound.

Lee and Park~\cite{LeePark14} develop a sound and complete proof system for SL including the magic wand.
Moreover, they derive a decision procedure from their completeness proof for propositional SL\@.
However, more expressive versions of SL (that include e.g.\ predicates and quantifiers) are undecidable~\cite{Brochenin12} and so this decision procedure cannot be directly applied in the logics employed by program verifiers.

\gaurav{
Chang~\etal~\cite{ChangRN07} define a shape analysis that derives magic wands $A \wand B$\as{ of a restricted form ($A$ and $B$ cannot contain general imprecise assertions); }
our package logic does not impose such restrictions\as{, which rule out some useful kinds of wands.}
%
For example, $A$ may be a data structure with a read-only part expressed via existentially-quantified fractional permissions or $A$ may contain the necessary permission to invoke a method, which may be an arbitrary assertion.
In follow-up work, Chang and Rival~\cite{ChangR08} present a restricted ``inductive'' magic wand.
Footprints of inductive \tout{magic }wands are expressed via a finite unrolling of \gout{the}\gaurav{an} inductive predicate defining $B$ until the permissions in $A$ are revealed.
Such wands are useful to reason about data structures with back-pointers such as doubly-linked lists}.

Iris~\cite{iris-ground-up} provides a custom proof mode~\cite{MoSelIris2018} for interactive SL proofs in Coq~\cite{coq}.
Separation logics expressed in Iris support wands and are more expressive than those of automatic SL verifiers at the cost of requiring more user guidance.
Packaging a wand in the proof mode requires manually specifying a footprint and proving that the footprint is correct.
While tactics can be used in principle to automate parts of this process, there are no specific tactics to infer footprints.

Fractional assertions have been used in various forms~\cite{Boyland10,JacobsPiessens11,Dinsdale-YoungDGPV10,LeHobor18,Brotherston20}.
Le and Hobor~\cite{LeHobor18} allow combining two fractional assertions $A^p$ and $A^q$ only if $A$ is \emph{precise} in the SL sense (\ie{} $A$ describes the contents of the heaps in which it holds precisely).
To avoid \gout{imposing a side condition}requiring $A$ to be precise,
Brotherston~\etal~\cite{Brotherston20} introduce \emph{nominal labels} \gout{that can be associated with}\gaurav{for} assertions.
If an assertion is split into two fractional assertions, then the same fresh label can be associated with both \gout{fractional}parts to indicate that they were split from the same assertion.
\gout{
Brotherston~\etal~allow combining two fractional assertions if both assertions are associated with the same label.}
\gaurav{Two fractional assertions with the same label can be combined.}
However, \gout{their}\gaurav{this} solution has not been implemented\gout{in a verifier} and \gout{their work}does not deal with packaging wands.
Our solution also avoids requiring that an assertion is precise and allows combining assertions even if they were not split from the same assertion. Instead of introducing \gout{nominal}labels, we introduce a light restriction that ensures that wands are always combinable.
As a result, \gout{general}assertions containing combinable wands but no other potentially imprecise connectives (such as disjunction) are combinable. In particular, all assertions employed in verifiers such as VerCors and Viper can be made combinable thanks to our work.

\section{Conclusion}\label{sec:conclusion}
We presented a package logic that precisely characterises sound package algorithms for automated reasoning about magic wands. Based on this logic, we developed a novel package algorithm that is inspired by an existing approach, but is sound. Moreover, we identified a sufficient criterion for wands to be combinable, such that they can be used flexibly in logics with fractional permissions, and presented a package algorithm for combinable wands. We implemented our solutions in Viper and demonstrated their practical usefulness. The soundness and completeness of our package logic, as well as key properties of combinable wands are all proved in Isabelle/HOL\@.
As future work, we plan to extend the implementation of the two package algorithms described in~\secref{sec:automation} by porting various features of the pre-existing FIA implementation. Moreover, we will use our package logic to develop another algorithm for Viper's symbolic-execution verifier.

\subsubsection*{Acknowledgements.}
This work was partially funded by the Swiss National Science Foundation (SNSF) under Grant No. 197065.

%
%
\bibliographystyle{splncs04}
\bibliography{references}

\newpage
\appendix
\section{Example for the Footprint Inference Attempt}\label{app:fia-example}

\begin{figure}[t]
  \centering
 \includegraphics[scale=0.55]{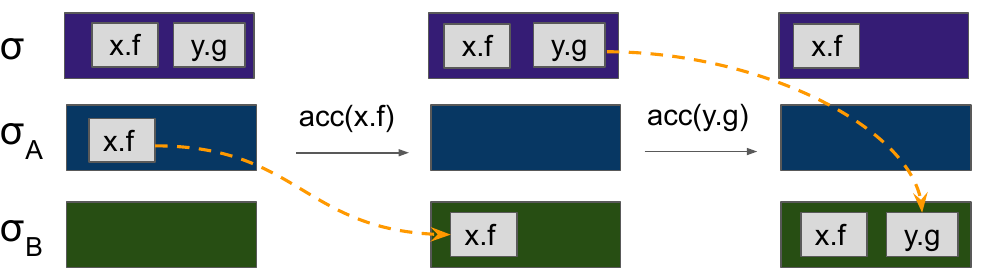}
 \caption{A visualisation of the footprint inference baseline traversing the right-hand side of $\code{acc(x.f)} \wand \code{acc(x.f)} * \code{acc(y.g)}$ in a state $\sigma$.
 $A$ holds in $\sigma_A$. Each small grey box represents a permission to a heap location.}
 \label{fig:fib_simple}
\end{figure}

\figref{fig:fib_simple} visualises the FIA for the right-hand-side of the wand $\code{acc(x.f)} \wand \code{acc(x.f)} * \code{acc(y.g)}$ in a current state $\sigma$ that has permissions to \code{x.f} and \code{y.g}.
$\sigma_A$ is an arbitrary state in which the wand's left-hand-side holds, and thus has permission to \code{x.f}.
When traversing the first conjunct \code{acc(x.f)}, the FIA removes permission to \code{x.f} from $\sigma_A$ and adds it to $\sigma_B$.
For the second conjunct \code{acc(y.g)}, the FIA removes permission to \code{y.g} from $\sigma$, since $\sigma_A$ does not contain any permission to \code{y.g}.
Since these were all the requirements from the wand's right-hand side, the FIA succeeds and the footprint (the permissions taken from $\sigma$) is a state with permission to \code{y.g}, which is a correct footprint.

\section{Unsoundness of the FIA in Viper}
\label{app:viper-unsound}

As we explained in \secref{subsec:unsoundness}, packaging the wand
$w := \code{acc(x.f)} * (\code{x.f = y} \vee  \code{x.f = z}) \wand \code{acc(x.f)} * \code{acc(x.f.g)}$
using the FIA leads to unsound reasoning:
Starting in a state with permission to \code{x.f}, \code{y.g}, and \code{z.g},
we can prove the assertion $\code{acc(x.f)} * (\code{acc(y.g)} \vee \code{acc(z.g)}) * w$.
However, a correct footprint of $w$ must either have some permission to \code{x.f}, or permission to \emph{both} \code{y.g} and \code{z.g}.
Therefore, $\code{acc(x.f)} * (\code{acc(y.g)} \vee \code{acc(z.g)}) * w$ is actually equivalent to false.

\begin{figure}
\begin{viper2}[numbers=left, stepnumber=1,firstnumber=1]
field f: Ref
field g: Int

method main(x:Ref, y:Ref, z:Ref)
    requires acc(x.f) && acc(y.g) && acc(z.g)
{
    package acc(x.f) && (x.f == y || x.f == z) --* acc(x.f) && acc(x.f.g)
    {
        assert x.f == y ? acc(y.g) : acc(z.g)
    }
    assert (acc(x.f) && (x.f == y || x.f == z) --* acc(x.f) && acc(x.f.g))
           && acc(x.f) && (perm(y.g) == write || perm(z.g) == write)
    if (perm(y.g) == write) {
        x.f := y
    }
    else {
        x.f := z
    }
    apply acc(x.f) && (x.f == y || x.f == z) --* acc(x.f) && acc(x.f.g)
    assert false
} 
\end{viper2}
\caption{A small Viper program that illustrates how to prove false using the unsoundness of the FIA.
This program relies on Viper's \emph{permission introspection} feature, which allows to inspect the amount of permission to a heap location owned by the current execution:
The expression \vipercode{perm(y.g)} yields the permission amount of \code{y.g} held by the current execution, not counting resources inside packaged wands.
}

\label{fig:viper-false}

\end{figure}


Viper currently implements the FIA, and it is possible to exploit the unsoundness of the FIA when packaging the wand $w$ to prove false.
While Viper does not directly support disjunctions of accessibility predicates, we can observe in \figref{fig:viper-false} that the assertion $\code{acc(x.f)} * (\code{acc(y.g)} \vee \code{acc(z.g)}) * w$
holds after packaging the wand $w$.
This example relies on Viper's \emph{permission introspection} feature:
The expression \vipercode{perm(y.g)} (for a reference \code{y} and a field \code{g}) yields the permission amount of \code{y.g} held by the current execution, not counting resources inside packaged wands.

The program shown in \figref{fig:viper-false} is currently verified by Viper.
Method \code{main} starts in a state with permission to \code{x.f}, \code{y.g}, and \code{z.g}.
We then package the wand $w$ (lines 7-10) using the FIA, and help it a bit with a \emph{proof script} (line 9, see \appref{app:proofscript} for more details).
After the package, we assert
$w * \code{acc(x.f)} * (\code{acc(y.g)} \vee \code{acc(z.g)})$ (lines 11-12), using permission introspection to express the disjunction.\footnote{Contrary to accessibility predicates (such as \code{acc(y.g)}), Viper allows combining disjunctions with permission introspection.}
We can actually prove false explicitly using this magic wand.
To do this, we assign \code{y} to \code{x.f} if the current execution owns \code{y.g}, and \code{z} otherwise (lines 13-18), using permission introspection.
Finally, we apply the wand (line 19).

Viper is able to prove false on line 20, because:
\begin{itemize}
    \item Either the current execution owns \code{y.g}, in which case the permission of \code{z.g} was computed as the footprint of $w$.
    Thus, the current execution satisfies the assertion $w * \code{acc(x.f)} * \code{acc(y.g)}$.
    Applying the wand $w$ with \code{x.f = y} effectively exchanges ownership of \code{x.f} with ownership of \code{y.g},
    resulting in a state that owns \code{y.g} twice, which is thus an inconsistent state.
    \item Or the current execution does not own \code{y.g}, which means that the permission of \code{y.g} was computed as the footprint of $w$,
        and thus the execution satisfies the assertion $w * \code{acc(x.f)} * \code{acc(z.g)}$.
        In this case, assigning \code{z} to \code{x.f} and then applying the wand $w$ leads to an inconsistent state, which owns \code{z.g} twice.
\end{itemize}

\section{Example for a Package Algorithm Based on the Package Logic}\label{app:example-sound-package}

\begin{figure}[t]
    \includegraphics[width=\textwidth]{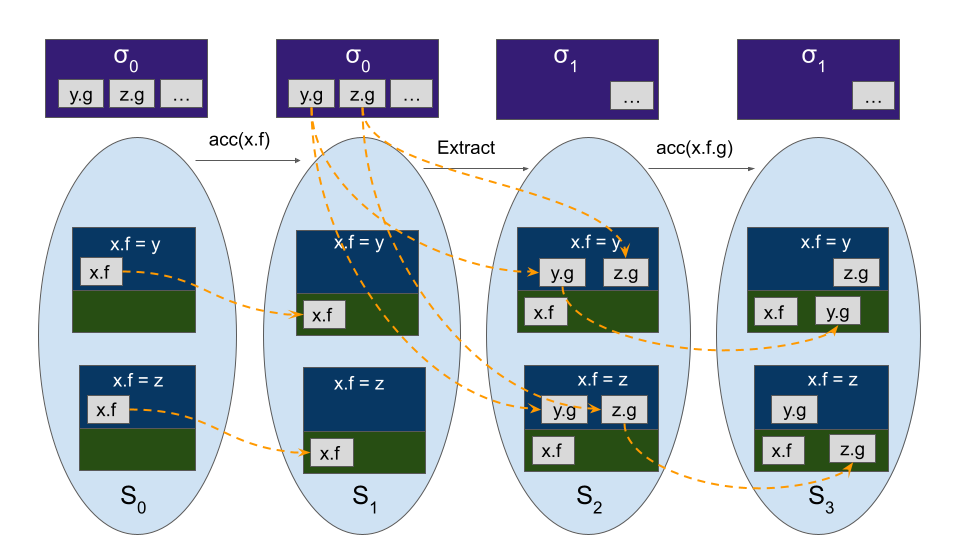}
    \caption{Illustration of how the algorithm described in \secref{subsec:fixing} packages the wand
    $\code{acc(x.f)} * (\code{x.f = y} \vee  \code{x.f = z}) \wand \code{acc(x.f)} * \code{acc(x.f.g)}$.
    $\sigma_0$ represents the program state before the package, and $\sigma_1$ is the outer state
    after permissions to \code{y.g} and \code{z.g} have been extracted (which correspond to a footprint of the wand inferred by the algorithm).
    The $S_i$ represent witness sets, i.e. sets of pairs of states.
    Pairs of states are represented as stacks of two states.
    Finally, permissions to heap locations are represented with light grey rectangles.
    }
    \label{fig:new_package}
\end{figure}

We illustrate in \figref{fig:new_package} the algorithm described in \secref{subsec:fixing}
Our goal is to package the wand
$\code{acc(x.f)} * (\code{x.f = y} \vee  \code{x.f = z}) \wand \code{acc(x.f)} * \code{acc(x.f.g)}$.
Recall that the FIA infers an incorrect footprint for this wand, as explained in \secref{subsec:footprint_infer_base}.
We assume that the initial outer state $\sigma_0$ contains permissions to $y.g$, $z.g$, and some other resources.
The initial context is $(\sigma_0, S_0)$, where
$S_0^1$ contains two minimal states that satisfy $\code{acc(x.f)} * (\code{x.f = y} \vee  \code{x.f = z})$ (the left-hand side of the wand).
The two states in $S_0^1$ have permission to \code{x.f}, and \code{x.f} contains value \code{y} in one state and \code{z} in the other one.
The second element of both pairs in $S_0$ is the empty state.

We first handle the first conjunct of the right-hand side, \code{acc(x.f)}.
Since both states in $S_0^1$ satisfy \code{acc(x.f)}, the first case applies.
For the two pairs $(\sigma_A, \sigma_B) \in S_0$, we transfer permission to \code{x.f} from $\sigma_A$ to $\sigma_B$, and we obtain the new witness set $S_1$.
We then handle the second conjunct of the right-hand side, \code{acc(x.f.g)}.
Since no state $\sigma_A \in S_1^1$ satisfies \code{acc(x.f.g)} (since the two states in $S_1^1$ have no permissions anymore), the second case applies.
We need to transfer permissions from the outer state $\sigma_0$ to all states of $S_1^1$.
Moreover, since \code{x.f} evaluates to \code{y} in one state of $S_1^1$ and to \code{z} in the other one,
we transfer permissions to both \code{y.g} and \code{z.g} to the states of $S_1^1$,
and we obtain the new context $(\sigma_1, S_2)$ (where $\sigma_1$ is $\sigma_0$ without permissions to \code{y.g} and \code{z.g}).
For each pair of states $(\sigma_A, \sigma_B) \in S_2$, we can now transfer permissions from $\sigma_A$ to $\sigma_B$ to satisfy \code{acc(x.f.g)}.

In the end, the footprint inferred is the difference between the final outer state $\sigma_1$ and the initial outer state $\sigma_0$,
i.e. the state that contains permissions to \thibault{(and values of)} exactly \code{y.g} and \code{z.g}.
Therefore, the algorithm presented above finds the correct footprint for this wand, contrary to the FIA as we explained in \secref{subsec:footprint_infer_base}.
After the package the new state of the program is $\sigma_1$, in which we record the wand instance and proceed with the verification of the subsequent statements.

\section{Separation Algebra and Assertions}\label{app:context}

In this section, we formally define the separation algebra and the assertion language that our package logic builds on.

\begin{definition}\label{def:separation-algebra}
    A \emph{separation algebra} is a quintuple $(\Sigma, \oplus, e, |\_|, \textsf{stable})$, where:
    \begin{enumerate}
        \item $\Sigma$ is a set of states, $\oplus$ is a partial addition on $\Sigma$ that is commutative and associative,
        and $e$ is the neutral element of $\oplus$.
        \item $|\_|$ (the \emph{core}) is a function from $\Sigma$ to $\Sigma$.
        \item $\textsf{stable}$ is a function from $\Sigma$ to Booleans.
        \item The following axioms are satisfied:
        \begin{enumerate}
            \item $\forall x \in \Sigma \ldotp x = x \oplus |x| \land |x| = |x| \oplus |x|$
            \item $\forall x, c \in \Sigma \ldotp x = x \oplus c \Longrightarrow (\exists r \in \Sigma. |x| = c \oplus r)$
            \item $\forall a, b, c \in \Sigma \ldotp c = a \oplus b \Longrightarrow |c| = |a| \oplus |b|$
            \item $\textsf{stable}(e) \land (\forall a, b, c \in \Sigma \ldotp
            c = a \oplus b \land \textsf{stable}(a) \land \textsf{stable}(b) \Longrightarrow \textsf{stable}(c))$
            \item $\forall a, b, c \in \Sigma \ldotp c = a \oplus b \land c = c \oplus c \Longrightarrow a = a \oplus a$ \emph{(positivity)}
            \item $\forall a, b, x, y \in \Sigma \ldotp a = b \oplus x \land a = b \oplus y \land |x| = |y| \Longrightarrow x = y$
            \emph{(cancellativity)}
        \end{enumerate}
    \end{enumerate}
\end{definition}

$|\sigma|$ represents the pure (duplicable) resources contained in the state $\sigma$.
A state $\sigma$ is \emph{pure} iff $\sigma = \sigma \oplus \sigma$, that is, iff it contains only pure information.
In an implicit dynamic frame setting, the values stored in the heap are considered pure resources (and thus duplicable),
but ownership is not (since it cannot be duplicated).
Pure resources can also include local variables, which can be used 
to represent SL assertions of the form $\exists v \ldotp \pointsto{x.f}{v} * A$
as separating conjunctions without existentials (where the value of $v$ is represented as a duplicable resource).

In an implicit dynamic frame setting, a state might contain a value of a heap location even if it does not have permission to this heap location.
This is necessary to define the evaluation of the separating conjunction, as explained in \secref{subsec:idf_background}.
A state $\sigma$ is stable, written $\textsf{stable}(\sigma)$, iff $\sigma$ only contains values of heaps locations to which it has some permission.
When packaging a wand in IDF, we only consider stable footprints.
Not doing so would give a different semantics to IDF wands and SL wands.
Consider for example the SL assertion $A := \pointsto{\code{x.f}}{5} * (\pointsto{\code{x.f}}{\_} \wand \pointsto{\code{x.f}}{5})$, which can also be expressed in IDF~\cite{ParkinsonSummers12}.
In SL, $A$ is equivalent to false.
In IDF, $A$ is equivalent to false only if footprints are enforced to be stable.
Indeed, if we would allow non-stable states to be footprints of wands, then a state with no permission at all but in which \code{x.f} contains the value $5$
would be a valid footprint of $\pointsto{\code{x.f}}{\_} \wand \pointsto{\code{x.f}}{5}$.
Thus, a state in which $\pointsto{\code{x.f}}{5}$ holds would satisfy $A$.

Axioms (a) and (b) state that the core of a state is its maximal pure part, while axiom (c) requires the function $|\_|$ to be linear.
Axiom (d) requires that the unit is stable, and that the sum of two stable states is also stable.
Finally, the positivity axiom states that any state smaller than a pure state has to be pure,
and the cancellativity axiom states that the algebra is cancellative for non-pure resources.

Using this separation algebra, we define the following partial order
on elements of $\Sigma$:
A state $\sigma_2 \in \Sigma$ is larger than another state $\sigma_1 \in \Sigma$, written $\sigma_2 \succeq \sigma_1$,
iff $\exists \sigma_r \in \Sigma \ldotp \sigma_2 = \sigma_1 \oplus \sigma_r$.
Moreover, we write $\sigma_1 \# \sigma_2$ iff $\sigma_1 \oplus \sigma_2$ is defined.
Finally, we define a subtraction operator, $\sigma_A \ominus \sigma_B$, which corresponds to the largest state $\sigma_r$ such that
$\sigma_A = \sigma_B \oplus \sigma_r$ if $\sigma_A \succeq \sigma_B$ (the other case is not relevant).

In order to enable a package algorithm to deconstruct the separating conjunctions and implications of the right-hand side of a wand, and to extract the footprint
piecewise as illustrated in \appref{app:example-sound-package}, we consider an assertion language that contains the separating conjunction and the implication connectives.
This allows us to write logical rules that only apply to a separating conjunction and to an implication, respectively.
Moreover, to be as general as possible, we do not fix the other connectives of the assertion language.
Thus, the third type of assertion we consider in our assertion language is the general type of \emph{semantic assertions}, i.e. functions from $\Sigma$ to Booleans.
This third type represents any SL assertion that is neither a separating conjunction nor an implication.
In particular, assertions such as \code{x.f = 5}, \code{acc(x.f)}, abstract predicates (such as \code{Tree(x)}) or magic wands are represented as semantic assertions.

\begin{definition}
    Let $\mathcal{B}$ range over semantic assertions, i.e. functions from $\Sigma$ to booleans.
    Assertions (ranged over by $A$) are defined as follows:
    $$A = A * A \mid \mathcal{B} \Rightarrow A \mid \mathcal{B}$$

    For a state $\sigma \in \Sigma$ and an assertion $A$, we write $\sigma \models A$ to say that $\sigma$ satisfies $A$,
    and define it as follows:
    \begin{alignat*}{2}
        &\sigma \models A_1 * A_2 &&\Longleftrightarrow (\exists \sigma_1, \sigma_2 \ldotp
        \sigma = \sigma_1 \oplus \sigma_2 \land \sigma_1 \models A_1 \land \sigma_2 \models A_2) \\
        &\sigma \models \mathcal{B} \Rightarrow A &&\Longleftrightarrow (\mathcal{B}(\sigma) \Longrightarrow \sigma \models A) \\
        &\sigma \models \mathcal{B} &&\Longleftrightarrow \mathcal{B}(\sigma)
    \end{alignat*}
\end{definition}

\appref{app:extensions} explains how this extend this assertion language and the logic to handle other connectives, such as the disjunction or the normal conjunction.

This assertion language is too permissive for our purpose.
In particular, we only want to consider assertions that are well-formed, that is, if an assertion is well-defined and holds in a state,
then adding pure resources to this state should not render the assertion false.
Informally, for an IDF assertion, well-formed corresponds to being self-framing.
We achieve this with monotonicity constraints:
A semantic assertion $\mathcal{B}$ which appears on the left-hand side of an implication should stay false if we add pure resources to a state in
which it is false, and semantic assertions which are not on the left-hand side of an implication should behave in the opposite way.

\begin{definition}\label{def:wf-assertion}
    A semantic assertion $\mathcal{B}$ is \emph{monotonically pure}, written $\mathit{monoPure}(\mathcal{B})$,
    iff 
    $(\forall \sigma, \sigma_p \in \Sigma \ldotp
    \sigma_p \text{ is pure} \land
    \mathcal{B}(\sigma) \land
    \sigma \# \sigma_r \Longrightarrow \mathcal{B}(\sigma \oplus \sigma_p))$.
    
    We write $\mathit{wf}(A)$ to say that the assertion $A$ is \emph{well-formed}. It is defined as follows:
    \begin{alignat*}{2}
        &\mathit{wf}(A_1 * A_2) &&\Longleftrightarrow \mathit{wf}(A_1) \land \mathit{wf}(A_2) \\
        &\mathit{wf}(\mathcal{B} \Rightarrow A) &&\Longleftrightarrow
        \mathit{monoPure}( \lnot \mathcal{B})
        \land \mathit{wf}(A) \\
        &\mathit{wf}(\mathcal{B}) &&\Longleftrightarrow \mathit{monoPure}(\mathcal{B})
    \end{alignat*}
\end{definition}

\section{Extending the Logic}\label{app:extensions}

The framework and the logic presented in \secref{sec:package_logic} operate only on a simple language for assertions:
An assertion is either a separating conjunction (star) of two assertions, an implication of a pure semantic assertion on the left-hand side and an assertion on the right-hand side,
or a semantic assertion.
Note that any assertion can be represented in this framework, since any assertion can be represented as a semantic assertion.
The star and the implication connectives that our framework provides enables a package algorithm to (1) deconstruct an assertion with these connectives,
and (2) apply the rule $\mathit{Extract}$ with some heuristic at the ``leaves'' of this assertion.

Thus, while other connectives such as the disjunction or the non-separating conjunction can still be handled using semantic assertions,
one might want to extend the assertion language along with the logic such that the algorithm can deconstruct these assertions even deeper.
In the following, we describe how one can extend the set of rules from \figref{fig:package_rules}
to handle disjunctions and non-separating conjunctions.

\paragraph{Disjunctions.}
To satisfy the disjunction $A \lor B$, a state must either satisfy $A$ or satisfy $B$.
When dealing with a set of extended states, a package algorithm can choose which extended states must satisfy $A$, and which ones must satisfy $B$.
More precisely, a rule to handle disjunctions could proceed in five steps:
\begin{enumerate}
    \item Separate the witness set into the set $S_0^1$ of extended states that must prove $A$, and the set $S_0^B$ of extended states that must prove $B$.
    \item Use the rules to handle the assertion $A$ with the witness set $S_0^1$. This gives a new witness set $S_1^1$.
    \item In step 2, the algorithm might have added a partial footprint to the witness set $S_0^1$ to get the new witness set $S_1^1$.
    Thus, this partial footprint should be added to $S_0^B$, which gives a new witness $S_1^B$.
    \item Use the rules to handle the assertion $B$ with the witness set $S_1^B$, which gives a new witness set $S_2^B$.
    \item In step 4, the algorithm might have added a partial footprint to the witness set $S_1^B$ to get the new witness set $S_2^B$.
    Thus, this partial footprint should also be added to $S_1^1$, which gives a new witness $S_2^1$.
    \item The final witness set is $S_2^1 \cup S_2^B$.
\end{enumerate}

\paragraph{Non-separating conjunctions.}
The satisfy the non-separating conjunction $A \land B$, a state must satisfy $A$ and $B$.
Thus, the idea in this case is to first use the rules to satisfy $A$, then ``reset'' the states and use the rules to satisfy $B$,
and finally take the ``union'' of these states.
More precisely,
\begin{enumerate}
    \item Use the rules to handle the assertion $A$ with the initial witness set $S_0$. This gives a new witness set $S_1$.
    \item Record, for each extended state, the resources which have been added to its second element to go from $S_0$ to $S_1$,
        and then transfer back these resources from the second element to the first element of the extended state.
        This gives a new witness set $S_2$.
    \item Use the rules to handle the assertion $B$ with the initial witness set $S_2$. This gives a new witness set $S_3$.
    \item For each extended state of $S_3$, consider the state $\sigma_r$ which has been added to its second element to go from $S_2$ to $S_3$.
        If $\sigma_r \models A$, then do not modify this extended state.
        If $\sigma_r \not\models A$, then transfer another state $\sigma_r'$ from the first element to the second element,
        such that $\sigma_r \oplus \sigma_r' \models A$.
        These transformations yield the final witness set.
\end{enumerate}

\section{Example of a Wand that is not Combinable}\label{app:example-not-combinable}

In \secref{sec:combinable}, we show that the wand $\code{acc(x.f, 1/2)} \wand \code{acc(x.g)}$
is not combinable, because of a footprint that is incompatible with \emph{all} states that satisfy \code{acc(x.f, 1/2)},
but that becomes compatible with some when scaled down (by half).
In this section, we show a wand that is incompatible with \emph{some} states that satisfy the left-hand side of the wand, to illustrate that this is still an issue.

Consider, the wand $w' := \code{acc(x.f)} * (\code{x.f = y} \vee \code{x.f = z}) * \code{acc(x.f.g, 1/2)} \wand \code{acc(y.g)}$.
$w'$ is not combinable.
It is straighforward to see that $\code{acc(y.g)} \models w'$.
Moreover, $\code{acc(y.g, 1/2)} * \code{acc(z.g)} \models w'$.
Indeed, \code{acc(z.g)} combined with $(\code{x.f = y} \vee \code{x.f = z}) * \code{acc(x.f.g, 1/2)}$ implies that $\code{x.f} = \code{y}$,
and \code{acc(y.g, 1/2)} combined with $\code{x.f} = \code{y}$ and \code{acc(x.f.g, 1/2)} entails the right-hand side \code{acc(y.g)}.
However, $\code{acc(y.g)}^{0.5} * (\code{acc(y.g, 1/2)} * \code{acc(z.g)})^{0.5}
\equiv \code{acc(y.g, 3/4)} * \code{acc(z.g, 1/2)} \not\models w'$, since $\code{x.f} = \code{z}$ is now possible.
Footprints satisfying \code{acc(y.g)} are scalable.
However, footprints that only satisfy $\code{acc(y.g, 1/2)} * \code{acc(z.g)}$ are not scalable.

\section{A State Model for Fractional Permissions}\label{app:state_model}

We define in this section an implicit dynamic frame state model with fractional permissions,
to instantiate the separation algebra as described in \defref{def:separation-algebra}.
Moreover, we define the meaning of a \emph{binary} assertion.

\begin{definition}\textbf{State model}.\\
    Let $L$ be a set of heap locations which contains a special element $\mathit{null}$,
    and let $V$ be a set of values.
    
    A state is a pair $(\pi, h)$ of a permission mask $\pi$ and a partial heap $h$, where
    \begin{itemize}
        \item $\pi: L \rightarrow \mathbb{Q} \cap [0, 1]$ maps each heap location to a fractional permission between $0$ and $1$ included, and
        \item $h: L \rightharpoonup V$ is a \emph{partial} mapping from heap locations to values.
    \end{itemize}

    A state $(\pi, h)$ is \emph{valid} iff (1) $\pi(\mathit{null}) = 0$ and $\forall l \in L \ldotp \pi(l) > 0 \Longrightarrow h(l)$ is defined.
    (1) ensures that having ownership of a heap location implies that this heap location is not null, while
    (2) ensures that the values of all heap locations owned are defined.
    $\Sigma$ is defined as the set of all valid states.
\end{definition}

This state model corresponds to a separation algebra:

\begin{definition}
    Given two valid states $(\pi_1, h_1)$ and $(\pi_2, h_2)$, the addition $(\pi_1, h_1) \oplus (\pi_2, h_2)$ is defined iff
    (1) $h_1$ and $h_2$ agree on heap locations where they are both defined and
    (2) $\forall l \in L \ldotp \pi_1(l) + \pi_2(l) \le 1$.
    In this case, $(\pi_1, h_1) \oplus (\pi_2, h_2) = (\pi_1 + \pi_2, h_1 \cup h_2)$ is a valid state.

    The empty state $e$ is defined as $(\lambda \_ \ldotp 0, \emptyset)$.
    The \emph{core} of a state $(\pi, h)$ is defined as $|(\pi, h)| = (\lambda \_ \ldotp 0, h)$.
    A state is \emph{stable} iff $\forall l \in L \ldotp \pi(l) > 0 \Longleftrightarrow h(l) \text{ is defined}$.

    $(\Sigma, \oplus, e, |\_|, \textsf{stable})$ defines a separation algebra.
\end{definition}

We define the partial multiplication of a state by a positive rational as follows:
\begin{definition}
    Let $\alpha \in Q^+$ be a positive rational, and $(\pi, h) \in \Sigma$ be a valid state.
    The product $\alpha \odot (\pi, h)$ is defined iff $\forall l \in L. \alpha \times \pi(l) \le 1$.
    In this case, $\alpha \odot (\pi, h) := (\lambda l \ldotp \alpha \times \pi(l), h)$ (which is a valid state).
\end{definition}

\begin{definition}
    We define the \emph{binary restriction} of a permission mask $\pi$ as follows:
    $\mathit{bin}(\pi)(l) =
        \begin{cases}
          1 & \text{if } \pi(l) = 1 \\
          0 & \text{otherwise}
        \end{cases}$

    An assertion $A$ is \emph{binary} iff $\forall (\pi, h) \in \langle A \rangle \ldotp (\mathit{bin}(\pi), h) \in \langle A \rangle$
\end{definition}

\section{Leveraging the Logic}\label{app:lifting-logic}

The definition of the combinable wand $A \cwand B$ in \secref{sec:combinable} corresponds to the normal definition of a magic wand, except that the footprint is transformed before being combined with states
that satisfy $A$.
We generalise this pattern with the concept of \emph{monotonic transformers}.
A \emph{transformer} is a function $t$ that transforms a state $\sigma$ into the state $t(\sigma)$.
It is monotonic iff $\forall \sigma_1, \sigma_2 \ldotp \sigma_2 \succeq \sigma_1 \Longrightarrow t(\sigma_2) \succeq t(\sigma_1)$.
In the case of combinable wands, the function $\lambda \sigma \ldotp \mathcal{R}(\sigma_A, \sigma)$ is a monotonic transformer, for each $\sigma_A$ that satisfies $A$.
In the following, we explain how to lift the package logic such that it is sound and complete w.r.t. to the following wand's definition:
\[
    \sigma_w \models A \twand B \Longleftrightarrow \left( \forall \sigma_A \ldotp
    \sigma_A \models A \land \sigma_A \# \mathcal{T}(\sigma_A, \sigma_w) \Rightarrow \sigma_A \oplus \mathcal{T}(\sigma_A, \sigma_w) \models B
    \right)
\]
where $\lambda \sigma \ldotp \mathcal{T}(\sigma_A, \sigma)$ is a monotonic transformer for each $\sigma_A$ that satisfies $A$.
Note that we get the definition of the usual wand by setting $\mathcal{T}(\sigma_A, \sigma) = \sigma$ for all $\sigma_A$ and $\sigma$.

The witness set is lifted from a set of pairs of states $(\sigma_A, \sigma_B)$ to a set of tuples $(\sigma_A, \sigma_B, t)$, where
$t$ is the monotonic transformer associated to $\sigma_A$.
The initial witness set is thus $\{ (\sigma_A, e, \lambda \sigma \ldotp \mathcal{T}(\sigma_A, \sigma)) \mid \sigma_A \models A \} $.
We also need to modify the rule \emph{Extract}, such that we combine a \emph{transformed} version of $\sigma_w$ to elements of the witness set
(recall that $\sigma_w$ represents the permissions we extract from the outer state).
Consider a triple $(\sigma_A, \sigma_B, t)$ from the witness set.
We cannot simply combine $\sigma_A$ with $t(\sigma_w)$, because the footprint might be extracted piecewise in the package logic,
and the transformer $t$ is only applied to the \emph{complete} footprint in the above definition.
Therefore, we need to compute the \emph{part} of the transformed footprint that we need to combine with $\sigma_A$.
To do this, we need to keep track of the current footprint that has been extracted so far.
If $\sigma_f$ is the footprint extracted so far, and $\sigma_w$ is the additional part we want to extract from the outer state,
the state $\sigma_A$ must be combined with $t(\sigma_f \oplus \sigma_w) \ominus t(\sigma_f)$.
We subtract $t(\sigma_f)$ from $t(\sigma_f \oplus \sigma_w)$, since $t(\sigma_f)$ has already been added to this tuple.
In order to keep track of the footprint $\sigma_f$ extracted so far,
we extend contexts from a pair of a program state $\sigma$ and a witness set $S$ to tuples $(\sigma, S, \sigma_f)$.
Finally, the current footprint is updated to be $\sigma_f \oplus \sigma_w$ in the rule \emph{Extract}.
We have proven in Isabelle/HOL that this lifted logic is sound and complete for the above wand's definition.

\paragraph{Examples.}
Consider again the standard wand $w := \code{acc(x.f, 1/2)} \wand \code{acc(x.g)}$ and the states $\sigma_f$ and $\sigma_g$, containing full permissions to only \code{x.f} and \code{x.g}, respectively.
As explained in \secref{sec:combinable}, $w$ is not combinable, because it holds in both $\sigma_f$ and $\sigma_g$, but not in $0.5 \cdot \sigma_f \oplus 0.5 \cdot \sigma_g$.

Consider now the combinable wand $w_c := \code{acc(x.f, 1/2)} \cwand \code{acc(x.g)}$,
which is combinable because \code{acc(x.g)} is combinable (\thmref{thm:combinable}).
$\sigma_g$ is a valid footprint of $w_c$.
To see this, consider a state $\sigma_A$ in which the left-hand side \code{acc(x.f, 1/2)} holds,
and with no permission to \code{x.g}.
By \defref{def:restricted-footprint}, $\mathcal{R}(\sigma_A, \sigma_g) = \sigma_g$,
and thus $\sigma_A \oplus \mathcal{R}(\sigma_A, \sigma_g) = \sigma_A \oplus \sigma_g \models \code{acc(x.g)}$.

A derivation corresponding to this footprint can be easily computed using the previously lifted package logic,
as (partially) shown below. To ease reading, we ignore the path condition (which is always true),
we write $\mathcal{R}_{\sigma_A} := (\lambda \sigma \ldotp \mathcal{R}(\sigma_A, \sigma))$,
and $S_A := \{ \sigma_A \mid \sigma_A \models \code{acc(x.f, 1/2)} \land \sigma_A \# \sigma_g \}$
for the set of all states $\sigma_A$ compatible with $\sigma_g$ that
satisfy\footnote{In the case of an intuitionistic logic, the initial witness set does not need to contain all states that satisfy the left-hand side $A$ of the magic wand; it is sufficient and sound if it contains a set of \emph{minimal} states that satisfy $A$. 
In this example, we abuse the notation $\sigma_A \models \code{acc(x.f, 1/2)}$
to express states with \emph{exactly} half permission to \code{x.f} and no permissions otherwise.}
\code{acc(x.f, 1/2)}.
To package the combinable wand $w_c$ in a state $\sigma_g \oplus \sigma_r$ with the footprint $\sigma_g$,
we need to find a derivation of
$$
\langle \scode{acc(x.g)}, \left(\sigma_g \oplus \sigma_r, e, \{ (\sigma_A, e, \mathcal{R}_{\sigma_A}) \mid \sigma_A \models \code{acc(x.f, 1/2)} \} \right) \rangle
    \rightsquigarrow (\sigma_r, \sigma_g, S')
$$
for some witness set $S'$.

\vspace{5mm}
\begin{adjustbox}{max width=\textwidth}
    \Inf[Extract]{
        \Inf[Atom]
        {\forall \sigma_A \in S_A \ldotp \sigma_A \oplus \sigma_g \succeq \sigma_g \land \sigma_g \models \code{acc(x.g)}}.
        { \{ (\sigma_A, \sigma_g, \mathcal{R}_{\sigma_A}) \mid \sigma_A \in S_A \} = 
        \{ ((\sigma_A \oplus \sigma_g) \ominus \sigma_g , e \oplus \sigma_g, \mathcal{R}_{\sigma_A}) | \sigma_A \in S_A \} }.
        {\langle \scode{acc(x.g)}, \left(\sigma_r, \sigma_g, \{ (\sigma_A \oplus \sigma_g, e, \mathcal{R}_{\sigma_A}) \mid
        \sigma_A \in S_A \} \right) \rangle
        \rightsquigarrow (\sigma_r, \sigma_g, \{ (\sigma_A, \sigma_g, \mathcal{R}_{\sigma_A}) \mid \sigma_A \in S_A \} )}
    }{\dagger}
    {\langle \scode{acc(x.g)}, \left(\sigma_g \oplus \sigma_r, e, \{ (\sigma_A, e, \mathcal{R}_{\sigma_A}) \mid \sigma_A \models \code{acc(x.f, 1/2)} \} \right) \rangle
    \rightsquigarrow (\sigma_r, \sigma_g, \{ (\sigma_A, \sigma_g, \mathcal{R}_{\sigma_A}) \mid \sigma_A \in S_A \} )}
\end{adjustbox}

\vspace{5mm}

We first (reading bottom up) apply the rule \textit{Extract} to extract $\sigma_g$ from the outer state (parameter $\sigma_w$ in \figref{fig:package_rules}).
Since the footprint extracted so far is initially $e$ (the empty state),
each state $\sigma_A$ in the witness set is combined with $\mathcal{R}(\sigma_A, e \oplus \sigma_g) \ominus \mathcal{R}(\sigma_A, e) = \mathcal{R}(\sigma_A, \sigma_g) \ominus e = \sigma_g$.
$\dagger$ represents the three other premises of the rule, namely $\mathit{stable}(\sigma_g)$, $\sigma_g = e \oplus \sigma_g$, and that 
$\{ (\sigma_A \oplus \sigma_g, e, \mathcal{R}_{\sigma_A}) \mid \sigma_A \in S_A \}$
corresponds to $\{ (\sigma_A, e, \mathcal{R}_{\sigma_A}) \mid \sigma_A \models \code{acc(x.f, 1/2)} \}$ where $\mathcal{R}(\sigma_A, e \oplus \sigma_g) \ominus \mathcal{R}(\sigma_A, e)$ is added to each $\sigma_A$.
Since \code{acc(x.g)} holds in $\sigma_A \oplus \sigma_g$ (for each $\sigma_A$ in which \code{acc(x.f, 1/2)} holds),
we then apply the rule \textit{Atom}
(we ignore the premise about $S_\bot$ since the path condition is always true)
to conclude the proof.

On the other hand, $\sigma_f$ is \emph{not} a footprint of $w_c$.
Indeed, consider a state $\sigma_A$ with the same value as $\sigma_f$ for \code{x.f} and in which \code{acc(x.f, 1/2)} holds.
Since $\sigma_A$ is compatible with $0.5 \cdot \sigma_f$, the second case of the definition of $\mathcal{R}$ (\defref{def:restricted-footprint}) applies,
and thus $\mathcal{R}(\sigma_A, \sigma_f)$ only has ($\min(1, 0.5) =$) $0.5$ permission to \code{x.f}.
Therefore, $\sigma_A \oplus \mathcal{R}(\sigma_A, \sigma_f)$ is defined, but does not satisfy \code{acc(x.g)},
and thus $w_c$ does not hold in $\sigma_f$.
Because the package logic is sound (\thmref{thm:soundness}), it is not possible to find a derivation in the logic to prove that $\sigma_f$ is a footprint of the combinable wand $w_c$.
If we try to construct a proof similar to the one for $\sigma_g$, the application of the rule \textit{Extract} to extract $\sigma_f$ would succeed,
and update each $\sigma_A$ in the witness set to $\sigma_A \oplus (\mathcal{R}(\sigma_A, e \oplus \sigma_f) \ominus \mathcal{R}(\sigma_A, e)) = \sigma_A \oplus \mathcal{R}(\sigma_A, \sigma_f)$.
However, since this updated state does not satisfy \code{acc(x.g)}, the application of the rule \textit{Atom} would not succeed.

\section{Automation}\label{subsec:automation}

\def\cLHS{\mathit{consLHS}}
\newcommand*{\tab}{\text{\quad}}
\def\ret{\texttt{return}~}
\def\assert{\texttt{assert}}
\newcommand{\match}[1]{\tab A \text{ is } #1 \rightarrow  \\ \tab \tab \ret}

\def\pRHS{\mathit{proveRHS}}

\def\ifAlg{\texttt{if }}
\def\fiAlg{\texttt{fi}}
\def\thenAlg{\texttt{ then}}

For the sake of presentation, we only discuss here the algorithm that packages standard wands.
It is straightforward to adapt it to compute combinable wands, by following the approach described in \appref{app:lifting-logic}.
\figref{fig:package-algorithm} presents, on a high-level, the algorithm we have implemented in Viper's VCG.
Viper's VCG uses a total-heap semantics of IDF~\cite{ParkinsonSummers12},
where Viper states (ignoring local variables) consist of a heap and a permission mask (mapping resources to the held ownership amounts).
The heap and the mask are represented in Boogie with maps.
Based on this representation of Viper states, we can represent sets of states (in the case of $\cLHS$ below)
and witness sets (in the case of $\mathit{handleProofScript}$ and $\pRHS$ below) with Boogie maps, which allows us to flexibly manipulate these sets.

\begin{figure}[t]

    \scriptsize

    \begin{minipage}[t]{.46\textwidth}
        \vspace{0pt}
        $\begin{array}{l}
            \mathit{package}(\sigma_0, A \wand B, ps) =\\
            \tab S_0 \leftarrow \{ (\sigma_A, e) \mid \sigma_A \in \cLHS( T_0, \top, A) \}\\
            \tab (\sigma_1, S_1) \leftarrow \mathit{handleProofScript}(\sigma_0, S_0, ps)\\
            \tab (\sigma_2, S_2) \leftarrow \pRHS((\sigma_1, S_1), \top, B)\\
            \tab \ret \sigma_2
        \end{array}$
    \end{minipage}
    \begin{minipage}[t]{.54\textwidth}
        \vspace{0pt}
        $\begin{array}{l}
            \cLHS(T, pc, A) =\\
            \match{A_1 * A_2} \cLHS(\cLHS(T, pc, A_1), pc, A_2)\\
            \match{b \Rightarrow A} \cLHS(T, pc \land b, A)\\
            \match{b \text{ where } b \text{ is pure}} \{ \sigma_A \mid \sigma_A \in T \land \sigma_A \models b \} \\
            \match{r \text{ where } r \text{ is a resource}} \{ \sigma_A \oplus R(\sigma_A, r) \mid \sigma_A \in T \land \sigma_A \# R(\sigma_A, r) \}
        \end{array}$
    \end{minipage}

\renewcommand{\match}[1]{\tab B \text{ is } #1 \rightarrow}

    \vspace{3mm}

    $\begin{array}{l}
        \pRHS((\sigma, S), pc, B) =\\
        \match{B_1 * B_2} \ret \pRHS(\pRHS((\sigma, S), pc, B_1), pc, B_2)\\
        \match{b \Rightarrow B} \ret \pRHS(T, pc \land b, B)\\
        \match{b \text{ where } b \text{ is pure}}
        \assert(\forall \sigma_A \in S^1 \ldotp pc(\sigma_A) \Rightarrow (\sigma_A \models b))
        ; \; \ret (\sigma, S) \\
        \match{r \text{ where } r \text{ is a resource}}\\
            \tab \tab \assert(\forall \sigma_A \in S^1 \ldotp \sigma_A \# \sigma \Rightarrow \sigma_A \oplus \sigma \succeq R(\sigma_A, r))
            ; \; (\sigma', S') \leftarrow (\sigma, S)
            \\
            \tab \tab \ifAlg{} \lnot (\forall \sigma_A \in S^1 \ldotp \sigma_A \succeq R(\sigma_A, r)) \thenAlg \\
                \tab \tab \tab \text{Compute a minimal } \sigma_w \text{ s.t. } \sigma \succeq \sigma_w \text{ and }
                \forall \sigma_A \in S^1 \ldotp \sigma_A \# \sigma_w \Rightarrow \sigma_A \oplus \sigma_w \succeq R(\sigma_A, r))\\
                \tab \tab \tab (\sigma', S') \leftarrow (\sigma \ominus \sigma_w, \{ (\sigma_A \oplus \sigma_w, \sigma_B) \mid (\sigma_A, \sigma_B) \in S \land (\sigma_A \oplus \sigma_B) \# \sigma_w \})\\
            \tab \tab \fiAlg \\
            \tab \tab \ret (\sigma', \{ (\sigma_A \ominus R(\sigma_A, r), \sigma_B \oplus R(\sigma_A, r)) \mid (\sigma_A, \sigma_B) \in S' \})
    \end{array}$

    \caption{High-level representation of the package algorithm we have implemented in Viper's VCG to compute standard wands.
    $\mathit{package}$ is the main function,
    $\cLHS$ constructs a set of minimal states that satisfy an assertion, and $\pRHS$ automates a proof search in the package logic.
    }
    \label{fig:package-algorithm}
\end{figure}

The main function, $\mathit{package}$, takes as input a program state $\sigma_0$, a wand $A \wand B$, and a \emph{proof script} $ps$.
\thibault{We ignore proof scripts here since they are orthogonal to the automation of the proof search,}
but we \gout{briefly}explain what they are in~\appref{app:proofscript}.
The $\mathit{package}$ function calls the function $\cLHS$, which creates a minimal set of states that satisfy $A$, to create the initial witness set $S_0$.
It then calls the function $\pRHS$, which automates a proof search in the package logic, to extract a footprint of $A \wand B$ from $\sigma_0$.
The $\mathit{package}$ function finally returns the program state $\sigma_2$, which corresponds to the state $\sigma_0$ to which a footprint of $A \wand B$ has been subtracted.
After the package algorithm has successfully executed, the verifier can record an instance of the wand $A \wand B$ in $\sigma_2$ to get the new program state.

The call $\cLHS(T_0, \top, A)$ constructs a set $T$ of minimal states that satisfy the assertion $A$.
$T_0$ represents a set of ``empty'' states, i.e. states with no permissions but a total heap.
The functions $\cLHS$ and $\pRHS$ work similarly to each other, traversing the assertion they receive as input.
In particular, both $\cLHS$ and $\pRHS$ pattern match the assertion, which gives rise to four cases.
If the assertion is a separating conjunction, both functions handle first the first conjunct and then the second conjunct.
If it is an implication, both functions syntactically conjoin the left-hand side of the implication to their path condition $pc$.
In the case of $\pRHS$, the separating conjunction case corresponds to the rule $\mathit{Star}$ from the package logic,
and the implication case to the rule $\mathit{Implication}$.

Finally, both functions distinguish pure assertions from assertions that correspond to resources.
Resource assertions in Viper correspond to permissions to heap locations (e.g. \code{acc(x.f)}), to predicates (e.g. \code{Tree(x)}), or to magic wands.
Pure assertions are assertions that do not contain resources, such as \code{x.f = 5}.
In the case of a pure assertion $b$, $\cLHS$ filters out the states that do not satisfy $b$, while $\pRHS$ asserts that all elements of $S^1$ satisfy $b$.
The latter corresponds to an application of the rule $\mathit{Atom}$.\footnote{In this case,
the witness set $S$ is not modified because, for each $(\sigma_A, \sigma_B) \in S$, $\mathit{choice}(\sigma_A, \sigma_B)$
(recall that $\mathit{choice}$ is a parameter of the rule $\mathit{Atom}$)
corresponds to pure resources (as defined in \appref{app:state_model}) that are already present in $\sigma_A$ and $\sigma_B$,
and thus $\sigma_A \ominus \mathit{choice}(\sigma_A, \sigma_B) = \sigma_A$, and $\sigma_B \oplus \mathit{choice}(\sigma_A, \sigma_B) = \sigma_B$.}

To handle resource assertions, we use the notation $R(\sigma_A, r)$, which corresponds to a minimal state
that satisfies the resource assertion $r$ in the state $\sigma_A$.
We need to evaluate $r$ in $\sigma_A$ because $r$ might be heap-dependent, for example $r$ could be \code{acc(x.f.g)} or \code{Tree(x.left)}.
In the case of a resource assertion $r$, $\cLHS$ combines all states of the set $T$ with a minimal state that satisfies $r$,
while $\pRHS$ applies the following strategy:
If all states of $S^1$ satisfy $r$ (in which case the if-branch is not entered), $\pRHS$ directly applies the rule $\mathit{Atom}$.
In this is not the case, then $\pRHS$ first applies the rule $\mathit{Extract}$
by computing a minimal state $\sigma_w$ to extract from the outer state $\sigma$, which corresponds to the if-branch,
and then the rule $\mathit{Atom}$.
Finally, the initial $\texttt{assert}$ statement checks that all states of $S^1$ combined with the outer state satisfy $r$,
which ensures that the following code corresponds to correct applications of rules from the package logic.

\section{Proof scripts}\label{app:proofscript}
A proof script is a program statement that\gout{contains hints to} helps the package algorithm \gout{find a proof in the package logic}\gaurav{infer or check a footprint}.
They are mainly useful when one must manipulate predicate instances or magic wand instances in order to infer a footprint, since complete automation in such cases is infeasible.
\gaurav{Both Viper and VerCors support proof scripts. Since Viper's and our package algorithm infer footprints, one must provide less elaborate proof scripts than for VerCors.}

A package algorithm executes all program statements in a proof script before considering the wand's right-hand side.
Executing a proof script is similar to justifying the right-hand side: permission from the wand's left-hand side or the current state must be potentially used to do so.

In our implementation, we support the following inductively defined proof scripts:
\begin{align*}
P = \; &\bassert{A} \mid \bfold{\code{Q(x)}} \mid \bunfold{\code{Q(x)}} \mid \bapply{A \wand A} \mid{}\\
    & P ; P \mid \code{if} (b) \; \{ P \} \;\code{else} \; \{ P \}
\end{align*}
where \code{Q(x)} is a predicate instace, $b$ is a boolean expression, and $A$ is an assertion.

The proof script $\bassert{A}$ forces the algorithm to justify assertion $A$, which can be used to direct the algorithm towards a specific footprint.
\bfold{\code{Q(x)}} forces the algorithm to justify the permissions in the body of \code{Q(x)} and to exchange them for the predicate instance \code{Q(x)}, while \bunfold{\code{Q(x)}} does the opposite.
\bapply{$A \wand B$} forces the algorithm to apply the wand.
Finally, proof scripts can be composed sequentially or one can be put under a conditional.

The \code{package} in line~\ref{line:leftLeaf_package_step} of~\figref{fig:leftLeaf} requires the following proof script for Viper and for our package algorithm
$$ \bfold{\code{Tree(y0)}}; \bapply{\code{Tree(y0)} \wand \code{Tree(x)}} $$
to successfully infer a footprint, where \code{y0} is an auxiliary variable that contains the value of \code{y} at the beginning of the loop iteration.
Note that this proof script does not explicitly specify the footprint. To execute the two operations, the package algorithms still must remove the necessary permissions from the left-hand side or the current state.
In this case, the first \bkeyword{fold} statement forces the algorithm to select the permissions for \code{y0} as part of the footprint and the \bkeyword{apply} statement forces the algorithm to select the applied wand instance as part of the footprint (where the left-hand side is obtained after executing the \bkeyword{fold} statement).

In VerCors, the proof script for the \code{package} in line~\ref{line:leftLeaf_package_step} additionally requires \bkeyword{assert} statements (called \code{use} statements in VerCors) for all the permissions in the footprint, since VerCors does not infer footprints.
Moreover, one needs to add the statement \bassert{\code{y0.left = y}}, since VerCors cannot infer that \code{y} must be the left node of \code{y0}.

\end{document}